\newcommand{\vmat}[1]{\begin{vmatrix}#1\end{vmatrix}}
\newcommand{\mat}[1]{\begin{bmatrix}#1\end{bmatrix}}
\DeclareMathOperator{\pr}{\mathsf{p}}
\DeclarePairedDelimiterX{\SqrBrk}[1]{[}{]}{#1}
\newcommand{\p}[1]{\pr_{#1}\SqrBrk*}
\newcommand{\phat}[1]{\widehat{\pr}_{#1}\SqrBrk*}
\newcommand{\widebar}[1]{\overline{#1}}
\DeclareMathAlphabet{\mathpzc}{OT1}{pzc}{m}{it}
\newcommand{\scripted}[1]{\mathpzc{#1}}
\newcommand{\MN}[1]{\min\left\lbrace #1 \right\rbrace}
\DeclareMathOperator{\Exp}{\mathsf{E}}
\newcommand{\E}[1]{\Exp_{#1}\SqrBrk*}
\DeclareMathOperator{\Var}{\mathsf{var}}
\newcommand{\V}[1]{\Var_{#1}\SqrBrk*}
\DeclareMathOperator{\Cor}{\mathsf{cor}}
\newcommand{\C}[1]{\Cor_{#1}\SqrBrk*}
\newcommand{\defeq}{\triangleq}
\newcommand{\almostsure}{\xrightarrow{a.s.}}
\newcommand{\distconv}{\xrightarrow{d}}
\renewcommand{\algocf@captiontext}[2]{#1\algocf@typo. \AlCapFnt{}#2} 
\def\@algocf@capt@plain{top}
\renewcommand{\algocf@makecaption}[2]{%
  \addtolength{\hsize}{\algomargin}%
  \sbox\@tempboxa{\algocf@captiontext{#1}{#2}}%
  \ifdim\wd\@tempboxa >\hsize
    \hskip .5\algomargin%
    \parbox[t]{\hsize}{\algocf@captiontext{#1}{#2}}
  \else%
    \global\@minipagefalse%
    \hbox to\hsize{\box\@tempboxa}
  \fi%
  \addtolength{\hsize}{-\algomargin}%
}
\begin{document}
\jname{}
\jyear{}
\jvol{}
\jnum{}
\markboth{Khoa T. Tran}{Generalised Metropolis--Hastings and other Markov chain Monte Carlo Algorithms}

\title{A Common Derivation for Markov Chain Monte Carlo Algorithms with Tractable and Intractable Targets}

\author{Khoa T. Tran}
\affil{}

\maketitle

\begin{abstract}
Markov chain Monte Carlo is a class of algorithms for drawing Markovian samples from high dimensional target densities to approximate the numerical integration associated with computing statistical expectation, especially in Bayesian statistics. However, many Markov chain Monte Carlo algorithms do not seem to share the same theoretical support and each algorithm is proven in a different way. This incurs a large amount of terminologies and ancillary concepts, which makes Markov chain Monte Carlo literature seems to be scattered and intimidating to researchers from many other fields, including new researchers of Bayesian statistics.

A generalised version of the Metropolis--Hastings algorithm is constructed with a random number generator and a self--reverse mapping. This formulation admits many other Markov chain Monte Carlo algorithms as special cases. A common derivation for many Markov chain Monte Carlo algorithms is useful in drawing connections and comparisons between these algorithms. As a result, we now can construct many novel combinations of multiple Markov chain Monte Carlo algorithms that amplify the efficiency of each individual algorithm. Specifically, we reinterpret slice sampling as a special case of Metropolis--Hastings and then propose two novel sampling schemes that combine slice sampling with directional or Hamiltonian sampling. Our Hamiltonian slice sampling scheme is also applicable in the pseudo marginal context where the target density is intractable but can be unbiasedly estimated, e.g. using particle filtering.
\end{abstract}

\begin{keywords}
Bayesian Statistics; Monte Carlo Integration; Markov chain Monte Carlo; Numerical Integration.
\end{keywords}

\section{Introduction}
Many statistical analyses often require the numerical evaluation of the expectation of an arbitrary function of interest, $f:\mathbb{R}^{n_\theta}\mapsto\mathbb{R}$, with respect to a given density $\pi(\theta)$ defined as
\begin{equation*}
  \E{\pi}{f}\defeq\int f(\theta)\pi(\theta)\dif\theta,\quad\theta\in\mathbb{R}^{n_\theta} ,
\end{equation*} where $\theta$ is the parametric vector in a given statistical model. This computation appears to be indeed a primal goal in many research areas, especially Bayesian statistics, because the expectation $\E{\pi}{f}$ actually encapsulates all statistical information contained in the density $\pi(\theta)$. For example, letting $f(\theta)\equiv\mathbb{1}_\scripted{A}(\theta)$ for any set $\scripted{A}\in\mathbb{R}^{n_\theta}$ will lead to $\E{\pi}{f} = \wp(A)$, which is the measure of $\scripted{A}$ with respect to the law of distribution given by $\pi(\theta)$. Other choices of the function $f(\theta)$ also lead to an even more flexible computational framework. When we are also interested in the expectation of another function $g(\theta)$ with respect to another density $\phi(\theta)$, then we can simply define $f(\theta)\defeq g(\theta)\phi(\theta)/\pi(\theta)$ and compute $\E{\phi}{g} = \E{\pi}{f}$. Furthermore, if we only have access to an unnormalised version of $\pi(\theta)$ denoted as $\widetilde{\pi}(\theta)\defeq\pi(\theta) Z_{\pi}$, then the normalising constant $Z_\pi$, which can be of interest in itself according to e.g. \cite{Skilling2006a,Feroz2013}, can be computed as $Z_\pi{=}1/\E{\pi}{\phi(\theta)/\widetilde{\pi}(\theta)}$ with $\phi(\theta)$ being some standard density such as a Gaussian.

Given such a central role and computational flexibility of the integration $\E{\pi}{f}$, there is great utility in designing general--purpose algorithms for approximating this expectation for arbitrary function $f(\theta)$ and density $\pi(\theta)$. However, the expectation $\E{\pi}{f}$ is also often a high dimensional integration problem, which renders conventional numeric approximation techniques such as Simpson’s rule ineffective due to e.g. unknown integration boundaries, exponential growth of the computational load with respect to the number of dimensions \citep{Kuo2005}.
   
Monte Carlo integration helps to reduce the computational load of high dimensional integration by way of random sampling from the target density $\theta_k\sim\pi(\cdot), k = 1,2,\dots M,$ and relying on the strong law of large number \citep{Gut2013} to approximate the required integral as follows
	\begin{equation*}
	\widehat{f}_M\defeq\frac{1}{M}\sum_{k=1}^M f(\theta_k) \almostsure\E{\pi}{f}.
	\end{equation*}	
   
One successful approach in sampling from high dimensional arbitrary density is to simulate a Markov chain with transition kernel $\scripted{K}(\dif\theta_{k+1}\mid\theta_k) $ that has an invariant density equal to the target density $\pi(\theta)$, i.e. $\theta_k\sim\pi(\cdot)\Rightarrow\theta_{k+1}\sim\pi(\cdot)$. According to the law of total probability, this invariance condition means $\scripted{K}(\dif\theta_{k+1}\mid\theta_k) $ has to at least satisfy the following  equality
\begin{equation}\label{eq:invariant}
	\int\scripted{K}(\dif\theta_{k+1}\mid\theta_k)\wp(\dif\theta_k) = \wp(\dif\theta_{k+1});\quad\wp(\dif\theta)\defeq\pi(\theta)\dif\theta .
\end{equation}

A central limit theorem is given by \cite{Chan1994} for Markovian sampling as follows
	\begin{equation}\label{eq:convergence}
	\begin{aligned}
		\sqrt{M}\left(\widehat{f}_M - \E{\pi}{f}\right) &\distconv\mathcal{N}(0,\sigma^2_{\widehat{f}}),\\
		\sigma^2_{\widehat{f}} =\V{\pi}{f(\theta)}\tau_{\widehat{f}},\quad
		\tau_{\widehat{f}} &\defeq  1+2\sum_{k=2}^\infty\C{}{f(\theta_1),f(\theta_k)}.
	\end{aligned}
	\end{equation}
The so called integrated autocorrelation time $\tau_{\widehat{f}}$ given in the central limit theorem (\ref{eq:convergence}) is a popular measure of inefficiency in using the Markovian samples $\theta_k,k=1,2\dots,M$ to estimate $\E{\pi}{f}$, since it is the ratio between the estimator variance and the true variance of $f(\theta)$ with respect to $\pi(\theta)$. This factor also highlights the fact that strongly correlated samples will lead to large estimator variance, i.e. inaccurate estimation of $\E{\pi}{f}$. Therefore the central goal in designing Markov chain Monte Carlo algorithms is to deliver Markovian samples $\theta_k\sim\pi(\cdot)$ that are as uncorrelated as possible. In this regard, some Markov chain Monte Carlo algorithms can outperform others depending on the specific application. However, choosing the best algorithm for a given application among many available sampling schemes is often a nontrivial matter.

The Metropolis--Hastings algorithm, which was pioneered by \cite{Metropolis1953} and later generalised by \cite{Hastings1970}, is a fairly general framework to construct the required transition kernel $\scripted{K}(\dif\theta_{k+1}\mid\theta_k)$ for arbitrary target densities $\pi(\theta)$ at moderate dimensionality. However, there are many other algorithms for high dimensional applications, such as Gibbs \citep{Geman1984}, Hamiltonian \citep{Duane1987}, directional \citep{Gilks1994}, univariate or elliptical slice  sampling \citep{Neal2003,Murray2010}, that appear on first reading to not fit into the current framework of Metropolis--Hastings as described by \cite{Roberts2004a}.

The independent understanding of many algorithms presents a few challenges to the readers of Bayesian statistics literature. First, it is difficult and also time consuming to understand multiple Markov chain Monte Carlo algorithms at once. It can also be much harder to compare or make any theoretical connections between multiple Markov chain Monte Carlo algorithms. Finally, it is difficult to judge when to use which algorithm among a plethora of options.

This work aims to address these difficulties by constructing a common mathematical framework for understanding multiple Markov chain Monte Carlo algorithms at once, which allows both transparent comparisons between these algorithms, and advantageous combinations that amplifies the efficiency of each individual algorithm. We now present this framework in the following section before deriving other Markov chain Monte Carlo algorithms as special cases.

\section{Generalised Metropolis--Hastings}
\subsection{The Current Framework}
The conventional Metropolis--Hastings algorithm is essentially constructed from a base proposal density $\gamma(\xi\mid\theta),\xi\in\mathbb{R}^{n_\theta},$ and an acceptance test which is formulated in a way that ensures the correct invariant density of the resulting chain is identical to $\pi(\theta)$. Specifically, the potential candidate for the next sample is generated from a tractable density in $\mathbb{R}^{n_\theta}$ denoted as
\begin{equation}\label{eq:MH_proposal}
  \xi\sim\gamma(\xi\mid\theta),
\end{equation} where $\gamma(\xi\mid\theta)$ is constructed from elementary random number generators to approximate the target $\pi(\xi)$. The mismatch between $\gamma(\xi\mid\theta)$ versus $\pi(\xi)$ is corrected by applying an acceptance test where for some random scalar $u\sim\mathcal{U}[0,1]$ and an acceptance probability defined as
\begin{equation}\label{eq:MH_acc}
	\alpha(\xi\mid\theta)\defeq\min\left\lbrace 1,\frac{\pi(\xi)}{\pi(\theta)}\frac{\gamma(\theta\mid\xi)}{\gamma(\xi\mid\theta)} \right\rbrace,
\end{equation} we choose the new sample to be
\begin{equation}\label{eq:MH}
	\theta_{k+1} = \begin{cases}
		\xi_k &\text{if }u_k\le \alpha(\xi_k\mid\theta_k),\\
		\theta_k &\text{otherwise}. 
	\end{cases}
\end{equation}

While being a very elegant approach in Markovian sampling, the Metropolis--Hastings framework cannot be employed to explain many other sampling algorithms. One important aspect of this framework is that the proposal density $\gamma(\xi\mid\theta)$ cannot be defined properly for other samplers, in the sense that while the proposal $\xi$ can be drawn by simulation, the associated density $\gamma(\xi\mid\theta)$ cannot be computed or properly defined in the same sampling space of $\theta\in\mathbb{R}^{n_\theta}$. 

For example, the Gibbs samplers in \citep{Geman1984,Casella1992,Damien1999}, which have unity acceptance probability, employ the following proposal
\begin{equation}\label{eq:Gibbs}
		\xi^d \sim\pi(\xi^d\mid\theta^{-d});\quad\xi^{-d} = \theta^{-d},
\end{equation} where $\theta^d\in\mathbb{R}^{n_d}$ is some subspace of $\theta\in\mathbb{R}^{n_\theta}$ and $\theta^{-d}\defeq\theta\setminus\theta^d$. In trying to explain Gibbs sampling using the Metropolis--Hastings framework, we can superficially recognise that the proposal density $\gamma(\xi\mid\theta)$ in this case is identical to $\pi(\xi^d\mid\theta^{-d})$. Then since $\xi^{-d} = \theta^{-d}$ by design and $\pi(\theta) = \pi(\theta^d\mid  \theta^{-d})\pi(\theta^{-d})$, we can show that the acceptance probability is unity as follows 
\begin{equation}\label{eq:Gibbs_acc}
	\alpha(\xi\mid\theta) =
	1\wedge\frac{\pi(\xi^d\mid  \xi^{-d})\pi(\xi^{-d})}
	{\pi(\theta^d\mid \theta^{-d})\pi(\theta^{-d})} 
	\cdot \frac
	{\pi(\theta^d\mid \xi^{-d})}
	{\pi(\xi^d\mid \theta^{-d})} = 1\quad\Rightarrow\quad\theta_{k+1}\equiv\xi_k.
\end{equation}

However, this interpretation of Gibbs sampling is improper because we have ignored the fact that this proposal only has a well--defined density in some subspace of $\mathbb{R}^{n_\theta}$, while the conventional Metropolis--Hastings requires $\gamma(\xi\mid\theta)$ to be a density in $\mathbb{R}^{n_\theta}$ \citep{Roberts2004a}. A proper explanation for Gibbs sampling, isolated from the reasoning of Metropolis--Hastings sampling, can be found in \citep{Tierney1994,Chan1994}.

While the problem appears to be only cosmetic in the case of Gibbs sampling, other sampling approaches present more significant deviation from the current framework of Metropolis--Hastings. We can see this in the case of directional sampling, as described by \cite{Roberts1994,Gilks1994,Chen1996a}, which has the following type of proposal
\begin{equation}\label{eq:directional}
	\xi = \theta + \scripted{r}(\scripted{v} + \rho\theta),
\end{equation} for some random variables $\scripted{v}\in\mathbb{R}^{n_\theta};\;\scripted{r}\in\mathbb{R}$ and a constant scalar $\rho$. For any given proposal density $\scripted{q}(\scripted{r},\scripted{v}\mid\theta)$, it is not clear how to derive the analytical expression for $\gamma(\xi\mid\theta)$ so the expression for the acceptance probability (\ref{eq:MH_acc}) is to no avail here.

The Hamiltonian Monte Carlo method, which originated from physics literature \citep{Duane1987}, is also actively studied again by e.g. \cite{Girolami2011,Neal2012} to deal with high dimensional densities. This approach includes some exotic proposal mechanisms such as elliptical slice sampling by \cite{Murray2010} or No-U-Turn sampling by \cite{Hoffman2014}, which result in both high acceptance rate and low sample correlation. In this approach, the proposal $\xi$ is constructed from the solution of the Hamiltonian differential equations
\begin{equation}\label{eq:Hamiltonian}
	\begin{aligned}
		\scripted{H}(\theta,\scripted{v}) &\defeq -\log\left( \pi(\theta)\scripted{q}(\scripted{v}\mid\theta) \right);\quad\theta,\scripted{v}\in\mathbb{R}^{n_\theta}, \\
		\frac{\dif\theta^i}{\dif t} &= \frac{\partial\scripted{H}(\theta,\scripted{v})}{\partial\scripted{v}^i};\quad\frac{\dif\scripted{v}^i}{\dif t} = \frac{- \partial\scripted{H}(\theta,\scripted{v})}{\partial\theta^i};\quad i = 1,2,\dots n_\theta,
	\end{aligned}
\end{equation} for a finite random duration $t\in[0,\scripted{r}]$ such that the trajectory starts at $\theta(0) \defeq \theta$ and ends at $\theta(r)\defeq\xi$. The auxiliary variable $\scripted{v}\sim\scripted{q}(\scripted{v}\mid\theta)$ can be interpreted as a fictional random momentum vector while $\theta$ can now be interpreted as the position of a physical particle in a gravitational field characterised by $\pi(\theta)$. The solution to (\ref{eq:Hamiltonian}) then defines a trajectory $\theta(t)$ through the space $\mathbb{R}^{n_\theta}$ that is naturally guided by the gradient of $\pi(\theta)$ and also preserves the total kinetic and potential energy represented by the fictional Hamiltonian term $\scripted{H}(\theta,\scripted{v})$. This is essentially why the proposed sample $\xi$ can travel very far away from its origin at $\theta$ while maintaining a high acceptance probability, which is indeed unity if (\ref{eq:Hamiltonian}) is solved analytically. Again in this case, there is really no hope in retrieving an analytical expression for $\gamma(\xi\mid\theta)$, and hence the current Metropolis--Hastings framework cannot properly explain the Hamiltonian sampling approach.

While we can rest assured that these and other algorithms all have their own theoretical support, the technical concepts and terminologies associated with each of them can be overwhelming to most readers of Bayesian computation literature, especially applied researchers who seek to employ these computational tools in their domain specific applications. Each of these types of proposals, e.g. (\ref{eq:Gibbs}), (\ref{eq:directional}) and (\ref{eq:Hamiltonian}), leads to different ways to guarantee the invariance condition (\ref{eq:invariant}) and various forms of the acceptance probability in each algorithm. More importantly, given any specific target density, it is not at all clear how we can, at least conceptually, judge the pros and cons of each algorithm among an array of algorithms in the Markov chain Monte Carlo literature.

The Metropolis--Hastings algorithm is very well studied, e.g. by \cite{Roberts2004a,Meyn2009}, and widely used, as disccused by \cite{Diaconis2009}. Hence there is great service in generalising this algorithm to accomodate a larger class of Markov chain Monte Carlo algorithms. Additionally, a common derivation for many algorithms not only helps to draw comparisons and connections between them, but also provides a flexible framework to synthesise multiple sampling approaches to result in more superior algorithms. This necessary generalisation of the conventional Metropolis--Hastings algorithm will now be presented before other algorithms and their strategic combinations can be given as special cases.

\subsection{The Generalised Framework}
In order to move to an understanding of multiple sampling approaches within a unified  framework, we propose to generalise the Metropolis--Hastings algorithm by decomposing the proposal density $\gamma(\xi\mid\theta)$ into 2 sub--components. The first component is a random number generator
	\begin{equation}\label{eq:proposal_V}
		\scripted{V}\sim\scripted{q}(\scripted{V}\mid\theta),
	\end{equation} where $\scripted{V}\in\mathbb{R}^{n_\scripted{V}}$ represents all the random numbers generated in each iteration and can has as many dimensions as needed. For example, in the conventional Metropolis--Hastings proposal (\ref{eq:MH_proposal}) we can define $\scripted{V}\defeq\xi-\theta\text{, such that } \scripted{q}(\scripted{V}\mid\theta) \equiv \gamma(\xi\mid\theta)$. This change of variable is quite trivial for Markov chain Monte Carlo algorithms that are special cases of conventional Metropolis--Hastings. However, it also allows us to describe other proposals such as Gibbs sampling in (\ref{eq:Gibbs}) by defining $\scripted{V}$ as comprised of the updated dimensions of $\mathbb{R}^{n_\theta}$ as follows
	\begin{equation}\label{eq:Gibbs_prop}
		\scripted{V}\defeq\xi^d\in\mathbb{R}^{n_d} \text{, which implies } \scripted{q}(\scripted{V}\mid\theta) \equiv \pi(\xi^d\mid\theta^{-d}).
	\end{equation}
	We can also explain directional (\ref{eq:directional}) or Hamiltonian sampling (\ref{eq:Hamiltonian}) by defining $\scripted{V}\defeq\mat{\scripted{r}, &\scripted{v}}\in\mathbb{R}^{n_\theta+1}$ and constructing appropriate proposal density $\scripted{q}(\scripted{r},\scripted{v}\mid\theta)$ for each algorithm.	
	
The second component is a deterministic mapping in the expanded space of $\mat{\theta,&\scripted{V}}$ denoted as
	\begin{equation}\label{eq:T}
	\begin{aligned}
		\scripted{T}:\mathbb{R}^{n_\theta}\times\mathbb{R}^{n_\scripted{V}} &\mapsto\mathbb{R}^{n_\theta}\times\mathbb{R}^{n_\scripted{V}}
\\
		\mat{\theta, & \scripted{V}} &\mapsto \mat{\xi, & \scripted{W}},
	\end{aligned}
	\end{equation} where we are only interested in using $\scripted{T}(\theta,\scripted{V})$ to calculate the proposal $\xi$ and the auxiliary vector $\scripted{W}$ will simply be discarded. Later, we will show that the proposals (\ref{eq:MH_proposal}),(\ref{eq:Gibbs}),(\ref{eq:directional}) and (\ref{eq:Hamiltonian}) can all be seen as examples of a combination of the random draw (\ref{eq:proposal_V}) followed by the mapping (\ref{eq:T}).

The Markov transition $\gamma(\xi\mid\theta)$ can only be in fact implemented from these two aforementioned sub--components since any computer operation can only be either a deterministic calculation or a pseudo random number generation. This decomposition is inspired by \cite{Green1995} in the context of model selection, where the number of sampling dimensions changes as the Markov chain proceeds. In what follows, we will apply this idea in the context of Markov chain Monte Carlo sampling and provide the necessary proofs, which also reveal surprising symmetry that exists in many current sampling methods, especially with slice sampling in section (\ref{sec:IPG}).

Let $\theta_k\sim\pi(\cdot)$ and denote the joint density of $\mat{\theta, &\scripted{V}}$ as $\pi(\theta,\scripted{V})\defeq\pi(\theta)\scripted{q}(\scripted{V}\mid\theta)$, the generalised Metropolis--Hasting algorithm to draw $\theta_{k+1}\sim\pi(\cdot)$ can be described as follows
\begin{algorithm}[h!]
\caption{Generalised Metropolis--Hasting Algorithm}
\label{alg:GMH}
\begin{enumerate}
	\item Draw a proposal $ \scripted{V}_k \sim\scripted{q}(\scripted{V}_k\mid\theta_k) $ and compute the generalised acceptance probability as 
		\begin{equation}\label{eq:GMH_acc}
		\alpha(\theta_k,\scripted{V}_k) \defeq\MN{1,\frac{\pi(\xi_k,\scripted{W}_k)}{\pi(\theta_k,\scripted{V}_k)}\vmat{J_\scripted{T}(\theta_k,\scripted{V}_k)}};\quad (\xi_k,\scripted{W}_k) \defeq\scripted{T}(\theta_k,\scripted{V}_k),
		\end{equation} where $\vmat{J_\scripted{T}(\theta_k,\scripted{V}_k)}$ is the determinant of the Jacobian of $\scripted{T}$ evaluated at $\mat{\theta_k,\scripted{V}_k}$.
	\item For some $u_k\sim\mathcal{U}[0,1]$, we set $\theta_{k+1} = \begin{cases}
		\xi_k &\text{if }u_k\le \alpha(\theta_k,\scripted{V}_k),\\
		\theta_k &\text{otherwise}. 
	\end{cases}$
\end{enumerate}
\end{algorithm}
\begin{remark}
	We often can only compute $\widetilde{\pi}(\theta) = \pi(\theta) Z_{\pi}$ but not $\pi(\theta)$, where the normalising constant $Z_{\pi}$ is unknown. However, algorithm (\ref{alg:GMH}) can be executed without change when $\pi(\theta)$ is replaced by $\widetilde{\pi}(\theta)$ since $Z_{\pi}$ will simply be cancelled when computing the acceptance probability.
\end{remark}
\begin{theorem}\label{thm:MH}
	The invariance condition (\ref{eq:invariant}) holds for algorithm (\ref{alg:GMH}), i.e. $\theta_k{\sim}\pi(\cdot){\Rightarrow}\theta_{k+1}{\sim}\pi(\cdot),$ if the mapping $\scripted{T}(\cdot)$ is continuously differentiable	and also self--inverse in the sense that 
	\begin{equation*}
		\scripted{T}^{-1}(\cdot)\equiv\scripted{T}(\cdot)\Leftrightarrow\scripted{T}\left(\scripted{T}(\theta,\scripted{V})\right) = (\theta,\scripted{V}).
	\end{equation*}
\end{theorem}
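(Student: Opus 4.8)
The plan is to lift the whole argument into the expanded space of $(\theta,\scripted{V})$ equipped with the joint density $\pi(\theta,\scripted{V}) = \pi(\theta)\scripted{q}(\scripted{V}\mid\theta)$, prove that the accept/reject move of algorithm (\ref{alg:GMH}) leaves this joint density invariant, and then recover the statement for $\theta_{k+1}$ by marginalisation. The reduction is immediate: since $\theta_k\sim\pi(\cdot)$ and $\scripted{V}_k\sim\scripted{q}(\scripted{V}_k\mid\theta_k)$, the pair $(\theta_k,\scripted{V}_k)$ is distributed as $\pi(\theta,\scripted{V})$; and because $\theta_{k+1}$ equals the $\theta$-component of the updated joint state in both the accept and the reject branch, while integrating $\pi(\theta,\scripted{V})$ over $\scripted{V}$ returns $\pi(\theta)$, invariance of the joint density forces $\theta_{k+1}\sim\pi(\cdot)$.

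The key algebraic ingredient I would isolate first is the Jacobian identity forced by self-inversion. Writing $z\defeq(\theta,\scripted{V})$ and differentiating $\scripted{T}(\scripted{T}(z)) = z$ by the chain rule gives $J_\scripted{T}(\scripted{T}(z))\,J_\scripted{T}(z) = I$, and therefore $\vmat{J_\scripted{T}(\scripted{T}(z))}\cdot\vmat{J_\scripted{T}(z)} = 1$. This reciprocity is exactly what makes the acceptance ratios at $z$ and at $\scripted{T}(z)$ inverse to one another: setting $R(z)\defeq\frac{\pi(\scripted{T}(z))}{\pi(z)}\vmat{J_\scripted{T}(z)}$, so that $\alpha(z) = \MN{1,R(z)}$, the identity together with $\scripted{T}(\scripted{T}(z)) = z$ gives $R(\scripted{T}(z)) = 1/R(z)$ and hence $\alpha(\scripted{T}(z)) = \MN{1,1/R(z)}$.

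With these in hand I would verify the balance relation
\begin{equation*}
	\pi(\scripted{T}(z))\,\alpha(\scripted{T}(z))\,\vmat{J_\scripted{T}(z)} = \pi(z)\,\alpha(z),
\end{equation*}
which reduces to the elementary identity $R\,\MN{1,1/R} = \MN{R,1}$. Representing the accept/reject step as the joint-space kernel $\mathcal{P}(z\to A) = \mathbb{1}_A(\scripted{T}(z))\,\alpha(z) + \mathbb{1}_A(z)\,(1-\alpha(z))$, I would then compute $\int\mathcal{P}(z\to A)\,\pi(z)\dif z$ and change variables in the accepted contribution through $z'=\scripted{T}(z)$. This is precisely where continuous differentiability enters: being self-inverse and $C^1$, the map $\scripted{T}$ is a bijective $C^1$ diffeomorphism, so the change-of-variables formula applies and the reciprocal-Jacobian identity turns $\dif z$ into $\vmat{J_\scripted{T}(z')}\dif z'$. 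The balance relation then collapses the accepted and rejected contributions to $\int_A\pi(z)\dif z$, establishing invariance of $\pi(\theta,\scripted{V})$ and, after marginalising out $\scripted{V}$, the theorem.

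The step I expect to be the main obstacle is this deterministic change of variables in the accepted term. Unlike conventional Metropolis--Hastings, the proposal here carries no density of its own, so the familiar proposal-ratio cancellation is replaced by Jacobian bookkeeping that closes only because $\scripted{T}$ is an involution; keeping the absolute-value conventions consistent and confirming that self-inversion plus continuous differentiability genuinely yields a measure-preserving bijection (so no mass is lost under the pushforward) is the delicate part, after which the argument is the standard reversibility-implies-invariance computation.
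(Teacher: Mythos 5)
Your proposal is correct and follows essentially the same route as the paper: invariance of the joint density $\pi(\theta,\scripted{V})$ in the expanded space, the accept/reject decomposition of the transition kernel, Jacobian reciprocity forced by self-inversion (your chain-rule derivation of $\vmat{J_\scripted{T}(\scripted{T}(z))}\vmat{J_\scripted{T}(z)}=1$ is the same fact the paper obtains from the inverse function theorem), and cancellation of the min terms followed by marginalisation. The only difference is presentational: you organise the cancellation through the identity $R\,\MN{1,1/R}=\MN{R,1}$ and a change of variables over measurable sets, where the paper performs the equivalent case analysis in infinitesimal notation.
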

\begin{proof}
Algorithm (\ref{alg:GMH}) is indeed invariant with respect to the joint density $\pi(\theta,\scripted{V})$ in the expanded space $\mathbb{R}^{n_\theta}\times\mathbb{R}^{n_\scripted{V}}$ and therefore is also invariant with respect to $\pi(\theta)$. Hence, let us simplify the notation by denoting the current and the next location of this Markov chain as
	\begin{equation*}
		\scripted{X} \defeq \mat{\theta, &\scripted{V}};\quad\scripted{Y}\defeq \begin{cases}
				\mat{\xi, &\scripted{W}}\defeq\scripted{T}(\scripted{X}) &\text{, if } u\le\alpha(\theta,\scripted{V}),\\
				\scripted{X} &\text{, otherwise}.
			\end{cases}
	\end{equation*}
	The inverse image of $\scripted{Y}$ is also denote as $\scripted{Z}\defeq\scripted{T}^{-1}(\scripted{Y})$.	 Given $\scripted{X}\sim\pi(\cdot)$, we can derive the distribution of $\scripted{Y}$ as follows:
	\begin{equation}\label{eq:Y_dist}
		\p{}{\dif\scripted{Y}} = \alpha(\scripted{Z})\p{}{\scripted{X}\in\dif\scripted{Z}} + \left(1-\alpha(\scripted{Y})\right)\p{}{\scripted{X}\in\dif\scripted{Y}},
	\end{equation} where the first term is the probability of starting at $\scripted{X}\in\dif\scripted{Z}$ and accepting a move to $\scripted{Y}\in\dif\scripted{Y}$ with probability $\alpha(\scripted{Z})$, and similarly, the second term is the probability of starting at $\scripted{X}\in\dif\scripted{Y}$ and rejecting a move with probability $\left(1-\alpha(\scripted{Y})\right)$. We can substitute (\ref{eq:GMH_acc}) into (\ref{eq:Y_dist}) to result in
	\begin{equation}\label{eq:theorem_1}
		\p{}{\dif\scripted{Y}} = \MN{1,\frac{\pi(\scripted{Y})}{\pi(\scripted{Z})}\vmat{J_\scripted{T}(\scripted{Z})}}\pi(\scripted{Z})\dif\scripted{Z} + \left( 1-\MN{1,\frac{\pi(\scripted{Z})}{\pi(\scripted{Y})}\vmat{J_\scripted{T}(\scripted{Y})}} \right)\pi(\scripted{Y})\dif\scripted{Y}.
	\end{equation} 
	
	Due to the mapping $\scripted{T}(\cdot)$ being self--inverse, we can apply the \emph{inverse function theorem}, e.g. see \citep{DeOliveira2013}, and find that $\vmat{J_\scripted{T}(\scripted{Z})} = \vmat{J_{\scripted{T}^{-1}}(\scripted{Y})}^{-1} = \vmat{J_\scripted{T}(\scripted{Y})}^{-1}$. This means in the case when $\frac{\pi(\scripted{Y})}{\pi(\scripted{Z})}\vmat{J_\scripted{T}(\scripted{Z})} \ge 1$ we will also have $\frac{\pi(\scripted{Z})}{\pi(\scripted{Y})}\vmat{J_\scripted{T}(\scripted{Y})} \le 1,$ so that (\ref{eq:theorem_1}) could also be simplified to
	\begin{equation}
	\label{eq:inverse_function_theorem}	
		\p{}{\dif\scripted{Y}} = \pi(\scripted{Z})\dif\scripted{Z} + \pi(\scripted{Y})\dif\scripted{Y} - \frac{\pi(\scripted{Z})}{\pi(\scripted{Y})}\vmat{J_\scripted{T}(\scripted{Y})}\pi(\scripted{Y})\dif\scripted{Y}.
	\end{equation}
	Further since $\frac{\pi(\scripted{Z})}{\pi(\scripted{Y})}\vmat{J_\scripted{T}(\scripted{Y})}\pi(\scripted{Y})\dif\scripted{Y} = \pi(\scripted{Z})\dif\scripted{Z},
$ we can again reduce (\ref{eq:inverse_function_theorem}) to precisely $$\p{}{\dif\scripted{Y}} = \pi(\scripted{Y})\dif\scripted{Y}.$$ The same conclusion is reached if $\frac{\pi(\scripted{Y})}{\pi(\scripted{Z})}\vmat{J_\scripted{T}(\scripted{Z})} < 1 $ and therefore $\scripted{Y}\sim\pi(\cdot)$ in both cases.
\end{proof}
\begin{remark}\label{rem:degenerated_mapping} Note that in the case where the $\scripted{V}$ component is degenerated, i.e. $n_{\scripted{V}} = 0$ and $\scripted{T}$ is still a self-reverse mapping in the space of $\theta\in\mathbb{R}^{n_\theta}$, the invariance condition in theorem (\ref{thm:MH}) still hold and the same proof is applicable without modification except that $\scripted{X,Y}$ simply degenerate into $\theta,\xi$ respectively. We will need this observation when explaining the connection between Metropolis-Hastings and slice sampling in section (\ref{sec:slice_sampling}).
\end{remark}

It is rather surprising that many of the existing Markov chain Monte Carlo algorithms confide with the preceding simple proof that relies on the invariance with respect to not only $\pi(\theta)$ but also the joint density $\pi(\theta,\scripted{V})$. Many other algorithms can be seen as special cases of theorem (\ref{thm:MH}) where the proposal $\scripted{q}(\scripted{V}_k\mid\theta_k)$ and the mapping $\scripted{T}(\theta_k,\scripted{V}_k)$ take on different forms. As presented in the remaining of this paper, each of the transition kernels investigated exhibits a special symmetry such that it corresponds to a self--inverse mapping, even though the original framework given by \cite{Green1995} also allows for more general mapping to be used as explained in appendix (\ref{appx:Green}). We now present some important classes of Markov chain Monte Carlo algorithms under this new framework along with their practical motivation. While a general framework to accommodate multiple algorithms is necessarily more involved than the conventional Metropolis--Hastings, we believe the main value of deriving new formulations for known algorithms is that now we can easily point out important theoretical connections that enable many advantageous combinations between those algorithms.

\section{Variants of Generalised Metropolis--Hastings}
\subsection{Metropolis--Hastings}

The conventional Metropolis--Hastings algorithm, given by equations (\ref{eq:MH_proposal})--(\ref{eq:MH}), can be seen as a special case of theorem (\ref{thm:MH}) by defining the mapping $\scripted{T}:\mat{\theta, & \scripted{V}} \mapsto \mat{\xi, & \scripted{W}}$ as follows
\begin{equation}\label{eq:T_4_MH}
	\xi \defeq \theta + \scripted{V};\quad \scripted{W} \defeq -\scripted{V};\quad\scripted{V}\in\mathbb{R}^{n_\theta} ,
\end{equation} so that the proposal density is consequently defined as $\scripted{q}(\scripted{V}\mid\theta) = \gamma(\xi\mid\theta)$. Since the self--inverse mapping (\ref{eq:T_4_MH}) has unity Jacobian, the acceptance probability (\ref{eq:GMH_acc}) reduces to the conventional acceptance probability (\ref{eq:MH_acc}). Within this class of algorithms, the Metropolis sampler is a special case when the proposal density is symmetrical in the sense that 
\begin{equation*}
	\scripted{q}(\scripted{V}\mid\theta) = \scripted{q}(\scripted{W}\mid\xi) = \scripted{q}(-\scripted{V}\mid\xi),
\end{equation*} which by substitution into (\ref{eq:GMH_acc}) will lead to the further simplified acceptance probability
\begin{equation}\label{eq:Metropolis_acc}
	\alpha(\theta,\scripted{V}) = \MN{1,\frac{\pi(\xi)}{\pi(\theta)}}.\end{equation}
A classic example of this type of proposal is the Gaussian random walk constructed by 
\begin{align}
	\scripted{V}&\sim\mathcal{N}(0,s\Sigma),\label{eq:Metropolis}\\
	\Sigma  &\defeq \frac{1}{k}\sum_{i=1}^k(\theta_i - \widebar{\theta})(\theta_i - \widebar{\theta})^{\mathsf{T}};\quad\widebar{\theta}\defeq\frac{1}{k}\sum_{i=1}^k\theta_i,\label{eq:AM_cov}
\end{align}
 where $\Sigma$ is an adaptive approximation of the global covariance of $\pi(\theta)$ as studied by \cite{Haario2001,Andrieu2008}. The scaling factor $s$ is added in (\ref{eq:Metropolis}) can be tuned using a Robbins--Monro recursive adaptation scheme, as given by \cite{Andrieu2008,Robbins1951}, to achieve an optimal running average acceptance rate $ \widebar{\alpha}\defeq\frac{1}{k}\sum_{i=1}^k\alpha(\theta_i,\scripted{V}_i) \in [0.15,0.5],$ as recommended by \cite{Roberts1997,Roberts2001}. 

While these adaptation techniques can be successful in moderately high number of dimensions, the optimal acceptance rate is still much less than unity, which leads to increased sample correlation and enlarged estimation error due to the factor $\tau_{\widehat{f}} $ in equation (\ref{eq:convergence}). Besides, Metropolis sampling also implicitly assumes that the target density is unimodal and the initial choice of $\Sigma$ approximates the true covariance matrix well. Both of these assumptions are not verifiable a priori and possibly not true in higher dimensional applications. The performance of Metropolis sampling is illustrated for a scalar density using a toy model in example (\ref{ex:Metropolis}), appendix (\ref{appx:toy_ex}).

\subsection{Gibbs Sampling}
The main advantage of Gibbs sampling is that we attain unity acceptance probability, given that we can draw exact samples from the conditional densities $\pi(\theta^d\mid\theta^{-d})$ while leaving the $\theta^{-d}$ component unchanged. For Gibbs sampling, the mapping $\scripted{T}:\mat{\theta, & \scripted{V}} \mapsto \mat{\xi, & \scripted{W}}$ is specified as 
\begin{equation*}
	\xi^{-d} = \theta^{-d};\quad\xi^d = \scripted{V};\quad\scripted{W} = \theta^d;\quad\scripted{V},\scripted{W}\in\mathbb{R}^{n_d},
\end{equation*} which is trivially self--inverse and has unity Jacobian. Consequently, the proposal density can be written as in (\ref{eq:Gibbs_prop}) so that by the same derivation in (\ref{eq:Gibbs_acc}), the acceptance probability becomes unity.

To get a full update for $\theta\in\mathbb{R}^{n_\theta}$, we simply have to update the all the subcomponents $\theta^d$ either sequentially or randomly as discussed by \cite{MacEachern2000}, while knowing that no matter how $\theta$ is divided into any number of subcomponents and which subcomponent are being updated, the invariance condition (\ref{eq:invariant}) is always satisfies according to theorem (\ref{thm:MH}). 

One major drawback for Gibbs sampling and its associated variants, such as auxiliary variables methods by \cite{Besag1993,Damien1999} or component--wise slice sampling by \cite{Neal2003}, is that these samplers can only travel along the Cartesian coordinates of $\mathbb{R}^{n_\theta}$. This behaviour can produce highly correlated samples when the target density has high spatial correlation between its dimensions. One way to circumvent this situation is to estimate the global covariance matrix, as with Metropolis sampling, and then perform an affine transformation of the coordinate system to effectively move along the eigenvectors of $\Sigma$ instead. One example for this approach in the case of slice sampling is given by \cite{Tibbits2013}. 

Gibbs--based sampling approaches therefore suffer from the same drawbacks as with Metropolis sampling, with regard to estimating the covariance structure of the target $\pi(\theta)$, but to a lesser degree thanks to the unity acceptance probability. These Gibbs--based variants are now presented.

\subsection{Generalised Metropolis--Hastings within Gibbs Sampling}
\label{sec:MH_Gibbs}
For many applications, we can draw exact samples from some but not all of the conditional densities in the subspaces of $\mathbb{R}^{n_\theta}$. Therefore, Gibbs sampling can be a more practical approach if we can relax the exact conditional sampling requirement so it is permissible that $\scripted{q}(\scripted{V}\mid\theta)\neq \pi(\xi^d\mid\theta^{-d})$ in any subspace where direct sampling from the conditional density is not feasible. 

It is probably well known that we can embed a conventional Metropolis--Hastings kernel inside a Gibbs sampling scheme. With the new framework, we can show in a fairly straightforward manner that performing any variant of the generalised Metropolis--Hastings within Gibbs sampling is indeed a special case of the generalised Metropolis--Hastings algorithm itself, with a different choice of mapping and proposal density. First, we redefine the mapping $\scripted{T}(\cdot)$ as follows
\begin{equation}\label{eq:MH_Gibbs}
	\xi^{-d} = \theta^{-d};\quad(\xi^d,\scripted{W}) \defeq \scripted{T}^d(\theta^d,\scripted{V}),
\end{equation} which leads to $\vmat{J_\scripted{T}(\theta,\scripted{V})} = \vmat{J_{\scripted{T}^d}(\theta^d,\scripted{V})},$ where any self--reverse mapping $\scripted{T}^d$ would suffice, e.g. when $\scripted{V},\scripted{W}\in\mathbb{R}^{n_d}$ we can let $\scripted{T}^d$ be given by a sub--version of (\ref{eq:T_4_MH}) restricted to $\mathbb{R}^{n_d}$ with unity Jacobian as follows
\begin{equation*}
	\xi^d = \scripted{V} + \theta^d;\quad\scripted{W} = -\scripted{V}.
\end{equation*}
Furthermore, since $\pi(\theta) = \pi(\theta^d\mid  \theta^{-d})\pi(\theta^{-d})$ and $\pi(\xi^{-d}) = \pi(\theta^{-d})$ by design, the acceptance probability now becomes identical to that of a general Metropolis--Hastings algorithm with the target being the conditional density $\pi(\theta^d\mid\theta^{-d})$ instead of the full density $\pi(\theta)$, i.e.
\begin{equation}\label{eq:MH_Gibbs_acc}
	\alpha(\theta,\scripted{V}) = \MN{1,\frac{\pi(\xi^d\mid\xi^{-d})\scripted{q}(\scripted{W}\mid\xi)}{\pi(\theta^d\mid\theta^{-d})\scripted{q}(\scripted{V}\mid\theta)}\vmat{J_{\scripted{T}^d}(\theta^d,\scripted{V})}}.
\end{equation} This means when any sub-partition of $\theta$ is fixed in a given Gibbs-liked iteration, we can apply just about any variants of Metropolis-Hastings to the conditional density given in the remaining subspace. One special advantage here is that the sampling kernel in the remaining subspace now can also be guided by the fixed components since $\scripted{q}(.)$ is conditioned upon the entire $\theta\in\mathbb{R}^{n_\theta}.$

As we will see in section (\ref{sec:directional}), being able to fit any Markov chain Monte Carlo algorithm in the framework provided by theorem (\ref{thm:MH}) into a Gibbs sampling scheme allows us to strategically combine algorithms that were previously thought to be unrelated. We illustrate the advantage of Gibbs sampling and its immediate extension to (\ref{eq:MH_Gibbs})--(\ref{eq:MH_Gibbs_acc}) in example (\ref{ex:Gibbs}), appendix (\ref{appx:toy_ex}).

\subsection{Auxiliary Variables Method}

The auxiliary variables method is a generalisation of the Swendsen--Wang algorithm for the Ising model in statistical mechanics given by \cite{Edwards1988}, which is subsequently studied by \cite{Besag1993,Higdon1998,Damien1999} and also, from a different viewpoint, by \cite{Neal1997}. In the context of high dimensional Markov chain Monte Carlo sampling, we can interpret the auxiliary variables method as a way to seperate out the irregular geometry induced by the data in the target density by introducing auxiliary random variables to transform these irregular components into conditionally uniform densities. 

Specifically, when the target density can be factored into a product of $N$ factors denoted as $\pi(\theta) = \prod_{n = 1}^N \scripted{l}_n(\theta)$, then we can introduce one independent auxiliary random variables $h_n$ for each factor $\scripted{l}_n(\theta)$ with the following conditional density
\begin{equation}\label{eq:aux_Gibbs_h}
  h_n\mid\theta\sim\mathcal{U}[0,\scripted{l}_n(\theta)], n=1,2,\dots N,
\end{equation} so that the joint density of $\mat{\theta,& \mathbb{h}\defeq\{h_n\}_{n=1}^N}$ becomes
\begin{equation}\label{eq:aux_Gibbs_joint}
  \pi(\theta,\mathbb{h})\defeq\pi(\theta)\p{}{\mathbb{h}\mid\theta} = \mathbb{1}_\Upsilon(\theta,\mathbb{h});\quad\Upsilon\defeq\{\mat{\theta,&\mathbb{h}} : 0 < h_n < \scripted{l}_n(\theta),\forall n\}.
\end{equation} 

Since $\pi(\theta)$ is a marginal density of $\pi(\theta,\mathbb{h})$, which is a uniform density as seen in (\ref{eq:aux_Gibbs_joint}), we can perform Gibbs sampling on $\pi(\theta,\mathbb{h})$ to retrieve $\theta_k\sim\pi(\cdot)$. To employ Gibbs sampling on $\pi(\theta,\mathbb{h})$, it is trivial to sample from the conditional density $\pi(\mathbb{h}\mid\theta)\defeq\prod_{n=1}^N\p{}{h_n\mid\theta}$ using (\ref{eq:aux_Gibbs_h}), while depending on the complexity of $\pi(\theta)$, we can draw exact samples from some if not all of the conditional densities for different subspaces of $\theta\in\mathbb{R}^{n_\theta}$ denoted as
\begin{equation*}
	\pi(\theta^d\mid\theta^{-d},\mathbb{h}) \propto \mathbb{1}_{\scripted{S}^d}(\theta^d);\quad\scripted{S}^d\defeq\{\theta^d : h_n < \scripted{l}_n(\theta), n=1,2,\dots N\}.
\end{equation*} 
Hence when the exact boundary of $\scripted{S}^d$ is tractable and facilitates direct sampling from, auxiliary variables method is a great way to apply Gibbs sampling to large dimensional problems, as illustrated in example (\ref{ex:aux_Gibbs}), appendix (\ref{appx:toy_ex}).

\subsection{Slice Sampling}\label{sec:slice_sampling}
Introducing too many auxiliary variables into an even trivially simple model, as seen in example (\ref{ex:aux_Gibbs}), appendix (\ref{appx:toy_ex}), can leads to excessive spatial correlation between the original parameters $\theta$ and the auxiliary dimensions $\mathbb{h}$. Therefore, \cite{Neal2003} argued that one auxiliary variable is often enough and leads to less correlated Markov samples. This so called slice sampling approach is then simply understood as a special case of auxiliary variable methods where the number of auxiliary factors is $N=1$, which leads to the following Gibbs sampling scheme
\begin{align}
	\scripted{h}\mid\theta &\sim \mathcal{U}[0,\pi(\theta)],\nonumber \\
	\theta\mid\scripted{h} &\sim \mathbb{1}_\scripted{S}(\theta);\quad\scripted{S}\defeq\{\theta : \pi(\theta) > \scripted{h}\}.\label{eq:slice}
\end{align}

While slice sampling essentially boils down to uniform sampling under the graph of $\pi(\theta)$ in $\mathbb{R}^{n_\theta}$, we can find some surprising equivalences between slice sampling and conventional Metropolis--Hastings. Let us first propose to solve the sampling task (\ref{eq:slice}) using a Metropolis sampling step in accordance with the Metropolis--Hastings within Gibbs approach as explained in section (\ref{sec:MH_Gibbs}). Specifically, we can propose $\xi = \theta + \scripted{V}$, using some symmetrical proposal density $\scripted{V}\sim\scripted{q}(\scripted{V}\mid\theta) = \scripted{q}(\scripted{W}\mid\xi)$, and compute the acceptance probability with respect to the target density (\ref{eq:slice}) as follows
\begin{equation}\label{eq:MH_Slice}
	\alpha(\theta,\scripted{V}) = \MN{1,\frac{\mathbb{1}_\scripted{S}(\xi) }{\mathbb{1}_\scripted{S}(\theta)}}= \mathbb{1}_\scripted{S}(\xi) = \begin{cases}
		1 &\text{ if }\pi(\xi) > \scripted{h},\\
		0 &\text{ otherwise.}
	\end{cases}
\end{equation}

As observed by \cite{Higdon1998}, the acceptance probability (\ref{eq:MH_Slice}) indeed corresponds to the acceptance probability (\ref{eq:Metropolis_acc}) for an ordinary Metropolis sampler with target density $\pi(\theta)$; since when we define $\scripted{h} \defeq u\pi(\theta) \sim\mathcal{U}[0,\pi(\theta)]$ for some $u\sim\mathcal{U}[0,1]$,  then (\ref{eq:MH_Slice}) is also equivalent to 
\begin{equation}\label{eq:M_Slice_acc}
	\theta_{k+1} = \begin{cases}
		\xi_k &\text{ if } u_k <{\pi(\xi_k)}/{\pi(\theta_k)},\\
		\theta_k &\text{ otherwise.}
	\end{cases}
\end{equation}
In other words, since the acceptance rate given in (\ref{eq:Metropolis_acc}) leads to the same transition rule as in (\ref{eq:M_Slice_acc}), we have shown that the Metropolis sampler is actually sampling uniformly under the graph of $\pi(\theta)$ when $\scripted{h}$ simply is a linear scaling of $u$. So in a way, we all have been doing slice sampling since the beginning of Markov Chain Monte Carlo literature without noticing the fact.

When the method was generalised by \cite{Hastings1970}, the preceding equivalence still surprisingly holds by considering an extended slice sampling scheme that draw uniform samples under the graph of the expanded target $\pi(\theta,\scripted{V})\defeq\pi(\theta)\scripted{q}(\scripted{V}\mid\theta)$ as follows  
\begin{align}
	\scripted{V}\mid\theta &\sim\scripted{q}(\scripted{V}\mid\theta)\nonumber,\\
	\scripted{h}\mid\theta,\scripted{V}&\sim\mathcal{U}[0,\pi(\theta,\scripted{V})]\nonumber,\\
	\theta,\scripted{V}\mid\scripted{h}&\sim\mathbb{1}_\scripted{S}(\theta,\scripted{V});\quad\scripted{S}\defeq\{\mat{\theta,&\scripted{V}} : \pi(\theta,\scripted{V})\ge\scripted{h}\}\label{eq:ext_slice},
\end{align} which admits $\pi(\theta,\scripted{V})$ and therefore also $\pi(\theta)$ as its invariance densities.

As mentioned during remark (\ref{rem:degenerated_mapping}), we can also solve the sampling step (\ref{eq:ext_slice}) using the mapping $\mat{\xi,&\scripted{W}}\defeq\scripted{T}(\theta,\scripted{V})$ alone, and if the Jacobian $\vmat{J_\scripted{T}}\equiv 1$, then the acceptance probability of $\mat{\xi,&\scripted{W}}$ with respect to the uniform target density (\ref{eq:ext_slice}) is computed as follows 
\begin{equation*}
	\alpha(\theta,\scripted{V}) = \mathbb{1}_\scripted{S}(\xi,\scripted{W}) = \begin{cases}
		1 & \text{if } \pi(\xi,\scripted{W})\ge\scripted{h},\\
		0 & \text{otherwise}.
	\end{cases}
\end{equation*}

Again, if we consider a similar linear scaling as before such that $\scripted{h}\defeq u\pi(\theta,\scripted{V})$ for some $u\sim\mathcal{U}[0,1]$, then the condition that ${\scripted{h}\le\pi(\xi,\scripted{W})}$ is also equivalent to $u\le\MN{1,\frac{\pi(\xi,\scripted{W})}{\pi(\theta,\scripted{V})}},$
which is identical to the conventional Metropolis--Hastings acceptance test. 

At this point, we can make one important observation regarding all variations of Metropolis--Hastings algorithm as follows. While the conventional presentation of the Metropolis--Hastings algorithm (\ref{alg:GMH}) may lead us to often thinking that $u$, or equivalently $\scripted{h}$, is generated only after the proposal $\xi$ becomes available, this ordering is indeed immaterial and could be reversed as below

\begin{align}
	\scripted{h}\mid\theta,\scripted{V}&\sim\mathcal{U}[0,\pi(\theta,\scripted{V})]\nonumber ,\\
	\scripted{V}\mid\theta,\scripted{h}&\sim\mathbb{1}_{\scripted{S}^\scripted{V}}(\scripted{V});\quad\scripted{S}^\scripted{V}\defeq\{\scripted{V} : \pi(\theta,\scripted{V})\ge\scripted{h}\}\label{eq:refresh_V} ,\\
	\theta,\scripted{V}\mid\scripted{h}&\sim\mathbb{1}_\scripted{S}(\theta,\scripted{V});\quad\scripted{S}\defeq\{\mat{\theta,&\scripted{V}} : \pi(\theta,\scripted{V})\ge\scripted{h}\}\nonumber,
\end{align}

As with this new slice sampling scheme above, we can actually condition the proposal density of $\scripted{V}$, and accordingly $\xi$, upon the specific value of $\scripted{h}$ in each iteration. For example, we should propose $\xi$ with larger distance from $\theta$ when $\scripted{h}$ is smaller and vice versa. This feature is indeed fully exploited by \cite{Neal2003} in his slice sampling scheme such that the proposal density of $\scripted{V}$ is locally adapted to both values of $\theta$ and $\scripted{h}$ as will be shown in the next section. Similarly, we will also exploit the slice sampling scheme (\ref{eq:refresh_V}) to later design the Hamiltonian slice sampler given in section (\ref{sec:HSS}).

The main contribution of \cite{Neal2003} is therefore  not only to motivate the single auxiliary variable approach, but also to design efficient methods for uniform sampling from the set $\scripted{S}$ as required in (\ref{eq:slice}) or (\ref{eq:ext_slice}, \ref{eq:refresh_V}), when an analytical solution for the boundary of $\scripted{S}$ is often not available. We illustrate the comparative performance of the single auxiliary variable approach motivated by \cite{Neal2003} in example (\ref{ex:slice}), appendix (\ref{appx:toy_ex}). These methods for uniform sampling from the set $\scripted{S}$, which are the main advantage edge in slice sampling, are now presented.

\subsection{Recursive Proposal Generation in Slice Sampling}\label{sec:IPG}
In conventional Metropolis sampling, rejection is both a necessity and a curse. The proof of theorem (\ref{thm:MH}) shows that rejections play a critical role in maintaining the invariance condition (\ref{eq:invariant}). On the other hand, rejection is also a source of sample correlation which increase the factor $ \tau_{\widehat{f}} $ given in (\ref{eq:convergence}). In the context of slice sampling algorithms, a \emph{recursive proposal generator} is an advantageous way to construct an abstract proposal density $\scripted{q}^\star(\scripted{V}^\star\mid\theta)$ such that the acceptance probability is unity regardless of the set $\scripted{S}$, i.e. we can guarantee $\theta_{k+1}\neq\theta_k$ with probability one, which helps reduce sample correlation in each slice sampling iteration. 

Specifically, for a uniform density on any bounded set $\scripted{S}\subset\mathbb{R}^{n_\theta}$ such as defined in (\ref{eq:slice}), \cite{Neal2003} provided a framework for continuing to generate subsequent proposals if the current proposal is rejected. In this framework, we can indeed generate infinitely many sequential proposals. This radical extension of the Metropolis--Hastings framework can be described in algorithm (\ref{alg:IPG}), whose condition for validity is accordingly given in theorem (\ref{thm:IPG}). 
\begin{algorithm}[h!]
\caption{Neal's Recursive Proposal Generation Algorithm}
\label{alg:IPG} For any bounded set $\scripted{S}\subset\mathbb{R}^{n_\theta}$ and assume that $\theta_k\sim\mathbb{1}_\scripted{S}(\cdot)$, we produce $\theta_{k+1}\sim\mathbb{1}_\scripted{S}(\cdot)$ (starting with $ n=1 $) as follows:
\begin{enumerate}
	\item Draw $ \scripted{V}_n \sim\scripted{q}_n(\scripted{V}_n\mid\theta_k) $ and compute $ (\xi_n,\scripted{W}_n) \defeq \scripted{T}(\theta_k,\scripted{V}_n) $ \label{st:innovation}
	\item If $ \xi_n\in\scripted{S} $, mark the successful proposal $\xi^\star_k = \xi_n $. Otherwise, repeat step (\ref{st:innovation}) with $ {n\leftarrow n+1} $
	\item Accept $\theta_{k+1} \equiv\xi^\star_k$ once $\xi^\star_k$ is generated
\end{enumerate}
\end{algorithm} 

Note that in the context of a single iteration, let us omit the subscript $k$ to improve brevity in the ensuing notations. Similarly with $\xi^\star$, we define $\scripted{V}^\star$ as the first accepted proposal in the sequence $\{\scripted{V}_n\}$ and $\scripted{W}^\star$ as the image of $\scripted{V}^\star$ through the mapping $\scripted{T}:\mat{\theta,&\scripted{V}^\star}\mapsto\mat{\xi^\star,&\scripted{W}^\star}$. We can conceptually think of algorithm (\ref{alg:IPG}) as a way to generate $\scripted{V}^\star$ from a special density denoted as 
\begin{equation*}
	\scripted{V}^\star\sim\scripted{q}^\star(\scripted{V}^\star\mid\theta) \text{ such that } \xi^\star\in\scripted{S} \text{ with probability one.}
\end{equation*}

To state the sufficient conditions for algorithm (\ref{alg:IPG}), we denote the probability of encountering a rejection in stage $n$ with the current sample being $ \theta $ as  $$ \Delta_n \defeq 1-\E{\scripted{V}_n\mid\theta}{\mathbb{1}_\scripted{S}(\xi_n)},\quad\scripted{V}_n\sim\scripted{q}_n(\scripted{V}_n\mid\theta) .$$ 
We similarly define the probability of rejection in stage $n$ with the current sample being $ \xi^\star $ as  $$ \widetilde{\Delta}_n \defeq 1-\E{\widetilde{\scripted{W}}_n\mid\xi^\star}{\mathbb{1}_\scripted{S}(\widetilde{\theta}_n)},\quad\widetilde{\scripted{W}}_n\sim\scripted{q}_n(\widetilde{\scripted{W}}_n\mid\xi^\star),$$ where $[\widetilde{\theta}_n,\widetilde{\scripted{V}}_n] \defeq \scripted{T}(\xi^\star,\widetilde{\scripted{W}}_n)$, i.e. the sequence $ \widetilde{\scripted{\theta}}_n$ or $\widetilde{\scripted{W}}_n$ represent the proposals generated with the same sampling scheme while the current sample being $\xi^\star$ instead of $\theta$.
\begin{theorem}\label{thm:IPG}
	The abstract proposal density $\scripted{q}^\star(\scripted{V}^\star\mid\theta)$ is symmetrical in the sense that 
\begin{equation}\label{eq:IPG_symmetry}
	\scripted{q}^\star(\scripted{V}^\star\mid\theta) = \scripted{q}^\star(\scripted{W}^\star\mid\xi^\star)
\end{equation}
if, given the notation $\Delta_0 = \widetilde{\Delta}_0 \defeq 1$,  the sequence of proposal densities $ \scripted{q}_n(\scripted{V}_n\mid\theta) $ satisfies the following extended Metropolis condition
\begin{equation}\label{eq:extended_symmetry}
	\scripted{q}_n(\scripted{V}^\star\mid\theta)\prod_{i=0}^{n-1}\Delta_i
=\scripted{q}_n(\scripted{W}^\star\mid\xi^\star)\prod_{i=0}^{n-1}\widetilde{\Delta}_i;\quad\forall n \in\mathbb{N}\setminus\{0\}.
\end{equation}
\end{theorem}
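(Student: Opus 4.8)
The plan is to obtain an explicit series representation for the abstract proposal density $\scripted{q}^\star$ and then to recognise that the extended Metropolis condition (\ref{eq:extended_symmetry}) forces the two relevant series to coincide term by term. First I would unpack algorithm (\ref{alg:IPG}) probabilistically. Because each $\scripted{V}_n$ is drawn independently from $\scripted{q}_n(\cdot\mid\theta)$, the event that the first acceptance occurs at stage $n$ and produces the value $\scripted{V}^\star$ decomposes into independent rejections at stages $1,\dots,n-1$, each carrying probability $\Delta_i$, followed by the draw of $\scripted{V}^\star$ at stage $n$ landing in the accepted region. Summing these contributions over the stage at which acceptance first happens gives
\[
	\scripted{q}^\star(\scripted{V}^\star\mid\theta) = \sum_{n=1}^\infty \left(\prod_{i=0}^{n-1}\Delta_i\right)\scripted{q}_n(\scripted{V}^\star\mid\theta),
\]
where the convention $\Delta_0\defeq 1$ encodes the absence of prior rejections at $n=1$, and the constraint $\xi^\star\in\scripted{S}$ is automatic on the support of $\scripted{V}^\star$.

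Next I would write the mirror-image representation obtained by running the same algorithm in reverse, started from $\xi^\star$. Since $\scripted{T}$ is self--inverse, we have $\scripted{T}(\xi^\star,\scripted{W}^\star) = (\theta,\scripted{V}^\star)$ with $\theta\in\scripted{S}$, so $\scripted{W}^\star$ is exactly the first accepted proposal of the reverse sequence $\widetilde{\scripted{W}}_n$. The identical bookkeeping, now with the reverse rejection probabilities $\widetilde{\Delta}_i$, yields
\[
	\scripted{q}^\star(\scripted{W}^\star\mid\xi^\star) = \sum_{n=1}^\infty \left(\prod_{i=0}^{n-1}\widetilde{\Delta}_i\right)\scripted{q}_n(\scripted{W}^\star\mid\xi^\star).
\]

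Finally, I would observe that condition (\ref{eq:extended_symmetry}) is precisely the assertion that the $n$-th summand of the first series equals the $n$-th summand of the second for every $n\in\mathbb{N}\setminus\{0\}$. Summing these equalities over $n$ collapses the two series onto each other and delivers the claimed symmetry (\ref{eq:IPG_symmetry}); the interchange is unconditionally legitimate because every term is non-negative, so Tonelli applies and no convergence subtlety arises in the summation step itself.

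I expect the main obstacle to lie in establishing the series representation rigorously rather than in the concluding summation, which is immediate. The delicate points are: justifying the across-stages independence that produces the product $\prod_{i=0}^{n-1}\Delta_i$; confirming that the reverse run indeed has $\scripted{W}^\star$ as its first accepted proposal, which is a direct consequence of the self--inverse property of $\scripted{T}$ assumed throughout theorem (\ref{thm:MH}); and verifying that the recursion terminates with probability one, so that $\scripted{q}^\star$ is a genuine density. The last requires $\prod_{i=1}^\infty\Delta_i = 0$, i.e. that the per-stage acceptance probabilities do not degenerate to zero, which holds whenever $\scripted{S}$ is bounded and each $\scripted{q}_n$ assigns positive probability to the accepted region.
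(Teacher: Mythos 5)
Your proposal is correct and follows essentially the same route as the paper's own proof: both decompose $\scripted{q}^\star(\scripted{V}^\star\mid\theta)$ and $\scripted{q}^\star(\scripted{W}^\star\mid\xi^\star)$ into the series over the mutually exclusive events of first acceptance at stage $n$, with the rejection products $\prod_{i=0}^{n-1}\Delta_i$ and $\prod_{i=0}^{n-1}\widetilde{\Delta}_i$ as weights, and then conclude by matching the series term by term via condition (\ref{eq:extended_symmetry}). Your added remarks on Tonelli, on the self--inverse mapping identifying $\scripted{W}^\star$ as the first accepted reverse proposal, and on $\prod_{i=1}^{\infty}\Delta_i=0$ ensuring $\scripted{q}^\star$ is a genuine density are sound refinements that the paper only addresses later (in the Gaussian example following theorem (\ref{thm:ALP})), but they do not change the argument.
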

\begin{proof}
We can show that (\ref{eq:extended_symmetry}) indeed leads to (\ref{eq:IPG_symmetry}) and therefore $\scripted{\xi}^\star$ has an unity acceptance probability according to corollary (\ref{crl:IPG}) to be shown later. Now, we proceed by observing that the density of $ \scripted{V}^\star $ can be identified as the sum probability of the mutually exclusive events of having $ \scripted{V}^\star $ as the first accepted proposal in stage $n^{th} = 1^{st},2^{nd},\dots \infty,$ as follows
\begin{equation}\label{eq:IPG_innv_density}
\scripted{q}^\star(\scripted{V}^\star\mid\theta) = \sum\limits_{n=1}^{\infty}\left(\scripted{q}_n(\scripted{V}^\star\mid\theta)\prod\limits_{i=0}^{n-1}\Delta_i\right).
\end{equation} 
And similarly, the density of $\scripted{W}^\star$ is also written as follows 
\begin{equation}\label{eq:IPG_innv_density2}
	\scripted{q}^\star(\scripted{W}^\star\mid\xi^\star) = \sum\limits_{n=1}^{\infty}\left(\scripted{q}_n(\scripted{W}^\star\mid\xi^\star)\prod\limits_{i=0}^{n-1}\widetilde{\Delta}_i\right).
\end{equation}
Finally, substituting (\ref{eq:IPG_innv_density}--\ref{eq:IPG_innv_density2}) into (\ref{eq:IPG_symmetry}) and comparing the terms with (\ref{eq:extended_symmetry}) shows that (\ref{eq:IPG_symmetry}) holds. 
\end{proof}
\begin{corollary}\label{crl:IPG}
	Algorithm (\ref{alg:IPG}) will be \emph{invariant} with respect to the uniform density $ \mathbb{1}_\scripted{S}(\theta) $ if $\scripted{T}(\cdot)$ always has an unity Jacobian.
\end{corollary}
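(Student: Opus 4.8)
The plan is to recognise Algorithm (\ref{alg:IPG}) as a single instance of the generalised Metropolis--Hastings step of Algorithm (\ref{alg:GMH}), taking the uniform density $\mathbb{1}_\scripted{S}(\theta)$ as the target $\pi(\theta)$ and the abstract density $\scripted{q}^\star(\scripted{V}^\star\mid\theta)$ from Theorem (\ref{thm:IPG}) as the proposal $\scripted{q}(\scripted{V}\mid\theta)$. Concretely, I would argue that drawing $\scripted{V}^\star\sim\scripted{q}^\star(\scripted{V}^\star\mid\theta)$ and setting $\mat{\xi^\star, & \scripted{W}^\star}\defeq\scripted{T}(\theta,\scripted{V}^\star)$ reproduces exactly the output of the recursive generator, so that the entire loop of rejected innovations collapses into one proposal governed by $\scripted{q}^\star$ together with the self--inverse mapping $\scripted{T}$ inherited from the framework. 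The joint target associated with this step is then $\pi(\theta,\scripted{V})=\mathbb{1}_\scripted{S}(\theta)\scripted{q}^\star(\scripted{V}\mid\theta)$, whose marginal is precisely the uniform density $\mathbb{1}_\scripted{S}(\theta)$ of interest.

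With this identification in hand, I would substitute the joint density into the generalised acceptance probability (\ref{eq:GMH_acc}) and verify that every factor reduces to unity:
\begin{equation*}
	\alpha(\theta,\scripted{V}^\star) = \MN{1,\frac{\mathbb{1}_\scripted{S}(\xi^\star)\,\scripted{q}^\star(\scripted{W}^\star\mid\xi^\star)}{\mathbb{1}_\scripted{S}(\theta)\,\scripted{q}^\star(\scripted{V}^\star\mid\theta)}\vmat{J_\scripted{T}(\theta,\scripted{V}^\star)}} = 1.
\end{equation*}
Here $\mathbb{1}_\scripted{S}(\theta)=1$ because $\theta_k\sim\mathbb{1}_\scripted{S}(\cdot)$, while $\mathbb{1}_\scripted{S}(\xi^\star)=1$ because step (2) of the algorithm guarantees $\xi^\star\in\scripted{S}$ with probability one; the ratio $\scripted{q}^\star(\scripted{W}^\star\mid\xi^\star)/\scripted{q}^\star(\scripted{V}^\star\mid\theta)=1$ is exactly the symmetry (\ref{eq:IPG_symmetry}) supplied by Theorem (\ref{thm:IPG}); and $\vmat{J_\scripted{T}(\theta,\scripted{V}^\star)}=1$ is the standing hypothesis of the corollary. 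An acceptance probability of one is consistent with step (3), which always accepts $\xi^\star$, so the step is genuinely a special case of Algorithm (\ref{alg:GMH}) with a self--inverse, continuously differentiable mapping. Invariance with respect to $\mathbb{1}_\scripted{S}(\theta)\scripted{q}^\star(\scripted{V}\mid\theta)$, and hence with respect to its marginal $\mathbb{1}_\scripted{S}(\theta)$, then follows immediately from Theorem (\ref{thm:MH}).

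The main obstacle I anticipate is not the algebra of the acceptance ratio, which is routine once Theorem (\ref{thm:IPG}) is available, but rather rigorously justifying the reduction in the first paragraph: that the potentially unbounded sequence of rejected innovations in Algorithm (\ref{alg:IPG}) can be legitimately summarised by the single proposal density $\scripted{q}^\star$ of (\ref{eq:IPG_innv_density}), and that the ``accept with probability one'' behaviour of the loop coincides with a unity Metropolis--Hastings acceptance rather than silently distorting the law of $\xi^\star$. In particular, care is needed to confirm that the series (\ref{eq:IPG_innv_density}) integrates to one, i.e. that the generator terminates almost surely, which uses the boundedness of $\scripted{S}$ and the requirement that each $\scripted{q}_n$ places positive mass on $\scripted{S}$, so that $\scripted{q}^\star$ is a bona fide density to which the generalised framework of Theorem (\ref{thm:MH}) applies.
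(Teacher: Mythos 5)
Your proposal is correct and follows essentially the same route as the paper: identify Algorithm (\ref{alg:IPG}) as a special case of the generalised framework with the abstract density $\scripted{q}^\star(\scripted{V}^\star\mid\theta)$ playing the role of the proposal, substitute into the generalised acceptance probability (\ref{eq:GMH_acc}), use the symmetry (\ref{eq:IPG_symmetry}) from Theorem (\ref{thm:IPG}) together with $\xi^\star\in\scripted{S}$ and the unity Jacobian to conclude $\alpha(\theta,\scripted{V}^\star)=1$, and then invoke Theorem (\ref{thm:MH}). Your closing concern about $\scripted{q}^\star$ being a bona fide density (almost-sure termination of the recursion) is a legitimate point that the paper also acknowledges, though it defers that check to the discussion following Theorem (\ref{thm:ALP}) rather than folding it into the proof of the corollary.
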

\begin{proof}
	Corollary (\ref{crl:IPG}) is a special extension of theorem (\ref{thm:MH}) since the acceptance probability in this case becomes unity as follows
\begin{equation*}
	\alpha(\theta,\scripted{V}^\star) = \MN{1,\frac{\mathbb{1}_\scripted{S}(\xi^\star)\scripted{q}^\star(\scripted{W}^\star\mid\xi^\star)}{\mathbb{1}_\scripted{S}(\theta)\scripted{q}^\star(\scripted{V}^\star\mid\theta)}\vmat{J_{\scripted{T}}(\theta,\scripted{V}^\star)} } = 1,
\end{equation*} which justifies why we always set $\theta_{k+1}\equiv\xi^\star_k$ as in algorithm (\ref{alg:IPG}). 
\end{proof}

According to section (\ref{sec:MH_Gibbs}), since algorithm (\ref{alg:IPG}) is a special, and albeit rather unusual, case of theorem (\ref{thm:MH}), we can embed it into the Gibbs sampling scheme (\ref{eq:slice}) to show that slice sampling is indeed a special case of generalised Metropolis--Hastings, where the original target density $\pi(\theta)$ is transformed into a uniform joint density $\mat{\scripted{h},&\theta}\sim\mathbb{1}_\Upsilon(\cdot)$ by introducing $\scripted{h}\sim\mathcal{U}[0,\pi(\theta)]$.

The levels of symmetry required in theorem (\ref{thm:IPG}) seems rather prohibitive but nonetheless can be satisfied in a fairly general manner. Essentially, condition (\ref{eq:extended_symmetry}) can be guaranteed if the probability of arriving at $\xi^\star$ starting from $\theta$ and vice versa is equal given any combination of rejected samples $\xi_n$ in between. We now present some specific cases of this approach.

First, let us consider the mapping $\scripted{T}:\mat{\theta,&\scripted{V}_n}\mapsto\mat{\xi_n,&\scripted{W}_n}$ as with Metropolis--Hastings where
\begin{equation}\label{eq:lemma1}
	\xi_n = \theta + \scripted{V}_n;\quad\scripted{W}_n = -\scripted{V}_n;\quad\scripted{V}_n\in\mathbb{R}^{n_\theta}.
\end{equation} Then we can construct $\scripted{q}_n(\scripted{V}_n\mid\theta)$ from the sequence of posterior densities of the following artificial inference problem. First, we generate a sequence of artificial random vectors 
\begin{equation*}
	\scripted{C}_i\sim\beta_i(\scripted{C}_i\mid\scripted{D}_\circ);\quad\scripted{C}_i,\scripted{D}_i\in\mathbb{R}^{n_\scripted{V}};\quad i = 1,2,\dots n,
\end{equation*} where $\scripted{D}_\circ$ represents some arbitrary choice of coordinate origin, e.g. ${\scripted{D}_\circ = 0}$ is actually a good option. Next, we sample $\scripted{V}_n$ from a sequence of posterior densities of $\scripted{D}_\circ$ given the prior $\eta(\scripted{D})$ and data $\scripted{C}_{1:n}\defeq\{\scripted{C}_i\}_{i=1}^n$ as follows
\begin{equation}\label{eq:artificial_posterior}
	\scripted{V}_n = \scripted{D}_n - \scripted{D}_\circ;\quad \scripted{D}_n \mid\scripted{C}_{1:n} \sim \frac{\eta(\scripted{D}_n) \prod_{i=1}^n\beta_i(\scripted{C}_i\mid\scripted{D}_n)}{Z_n},
\end{equation} where $Z_n$ is the corresponding finite normalising constant. Practically, it is best to put ${\scripted{D}_\circ = 0}$ so that $ \scripted{V}_n \equiv \scripted{D}_n$, but we will maintain the general context for notational clarity. Similarly with $\scripted{V}^\star$, we also denote $\scripted{D}^\star$ as the first draw that results an accepted sample $\xi^\star$. 

\begin{theorem}\label{thm:ALP}
	The extended Metropolis condition (\ref{eq:extended_symmetry}) will be satisfied by generating $ \scripted{V}_n $ from the artificial posterior density (\ref{eq:artificial_posterior}) while the mapping $\scripted{T}:\mat{\theta,&\scripted{V}_n}\mapsto\mat{\xi_n,&\scripted{W}_n}$ is given by (\ref{eq:lemma1}).
\end{theorem}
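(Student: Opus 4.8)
The plan is to read the two sides of (\ref{eq:extended_symmetry}) not as products of independent per--stage quantities but as the joint densities of the entire forward and reverse trajectories, and then to exhibit an explicit measure--preserving bijection between those trajectories. Concretely, $\scripted{q}_n(\scripted{V}^\star\mid\theta)\prod_{i=0}^{n-1}\Delta_i$ is the density of the event that algorithm (\ref{alg:IPG}) rejects at stages $1,\dots,n-1$ and accepts $\scripted{V}^\star$ at stage $n$. Since the stage--wise proposals are coupled through the accumulating artificial data $\scripted{C}_{1:n}$, I would first rewrite this density as a single integral over $\scripted{C}_{1:n}\sim\prod_{i=1}^n\beta_i(\scripted{C}_i\mid\scripted{D}_\circ)$, in which the rejected posterior draws are integrated out into cumulative rejection factors $1-\int_{A_\theta}p_i(\scripted{D}\mid\scripted{C}_{1:i})\dif\scripted{D}$ and the final accepted draw contributes the endpoint posterior $p_i(\scripted{D}\mid\scripted{C}_{1:i})\defeq\eta(\scripted{D})\prod_{j=1}^i\beta_j(\scripted{C}_j\mid\scripted{D})/Z_i$ from (\ref{eq:artificial_posterior}) evaluated at $\scripted{D}^\star=\scripted{D}_\circ+\scripted{V}^\star$; here $A_\theta\defeq\{\scripted{D}:\theta+\scripted{D}-\scripted{D}_\circ\in\scripted{S}\}$ is the acceptance region induced by (\ref{eq:lemma1}). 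The right--hand side admits the identical representation with $\theta$ replaced by $\xi^\star$, endpoint $\widetilde{\scripted{D}}^\star=\scripted{D}_\circ+\scripted{W}^\star=\scripted{D}_\circ-\scripted{V}^\star$, and acceptance region $A_{\xi^\star}$.

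The heart of the argument is a translation of the reverse data. Because (\ref{eq:lemma1}) gives $\xi^\star=\theta+\scripted{V}^\star$ and $\scripted{W}^\star=-\scripted{V}^\star$, one checks directly that $A_{\xi^\star}=A_\theta-\scripted{V}^\star$. I would therefore substitute $\widetilde{\scripted{C}}_i=\scripted{C}_i-\scripted{V}^\star$ (unit Jacobian) in the reverse integral and invoke the structure of (\ref{eq:artificial_posterior}). Assuming the likelihood is translation invariant, $\beta_i(\scripted{C}\mid\scripted{D})=\beta_i(\scripted{C}-\scripted{D})$, and the prior $\eta$ is flat, the substitution makes the reverse data--generating factor $\prod\beta_i(\widetilde{\scripted{C}}_i\mid\scripted{D}_\circ)$ coincide with the forward endpoint likelihood $\prod\beta_i(\scripted{C}_i\mid\scripted{D}^\star)$, and simultaneously makes the reverse endpoint likelihood $\prod\beta_i(\widetilde{\scripted{C}}_i\mid\widetilde{\scripted{D}}^\star)$ coincide with the forward data--generating factor $\prod\beta_i(\scripted{C}_i\mid\scripted{D}_\circ)$. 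The two factors thus swap, their product is invariant, and the normalisers $Z_i$ are preserved; this is exactly the symmetry that promotes the equality to the abstract proposal symmetry (\ref{eq:IPG_symmetry}) of theorem (\ref{thm:IPG}).

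It remains to match the cumulative rejection factors. Using $A_{\xi^\star}=A_\theta-\scripted{V}^\star$ together with a further shift $\scripted{D}\mapsto\scripted{D}-\scripted{V}^\star$ of the dummy variable inside each $\int_{A_{\xi^\star}}p_i(\scripted{D}\mid\widetilde{\scripted{C}}_{1:i})\dif\scripted{D}$, I expect every reverse rejection factor to equal its forward counterpart once the data has been substituted, again by translation invariance of $\beta_i$ and flatness of $\eta$. Hence the full reverse integrand is carried onto the full forward integrand, the two integrals agree, and (\ref{eq:extended_symmetry}) follows for every $n$; the case $n=1$ degenerates, via $\Delta_0=\widetilde{\Delta}_0=1$, to the plain symmetry $\scripted{q}_1(\scripted{V}^\star\mid\theta)=\scripted{q}_1(\scripted{W}^\star\mid\xi^\star)$ guaranteed by the symmetric stage--one posterior.

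I expect the main obstacle to be precisely the coupling hidden inside the notation $\prod_{i=0}^{n-1}\Delta_i$: because the proposals at different stages share the accumulating data, the marginal rejection probabilities satisfy $\Delta_i\neq\widetilde{\Delta}_i$ in general, so no term--by--term cancellation is available at the level of marginals and one is forced to work with the joint trajectory density. The delicate part is then to identify the single global bijection $\widetilde{\scripted{C}}_i=\scripted{C}_i-\scripted{V}^\star$ that simultaneously swaps the data--generating and endpoint--likelihood factors, preserves the normalisers $Z_i$, and maps each forward acceptance region $A_\theta$ onto the reverse region $A_{\xi^\star}$ inside every cumulative rejection factor. Pinning down the minimal hypotheses on $\beta_i$ and $\eta$ --- translation invariance of the likelihood and a flat prior --- under which all three occur at once is the crux of the proof.
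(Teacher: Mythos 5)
Your proposal is correct under its stated hypotheses and exploits the same underlying symmetry as the paper, but it formalises that symmetry by a genuinely different device, so a comparison is worthwhile. The paper reduces (\ref{eq:extended_symmetry}) to the pointwise ``same stepping stones'' identity (\ref{eq:lemma}), keeping the rejected proposals as explicit variables: it takes the reverse data literally identical, $\widetilde{\scripted{C}}_{1,n}=\scripted{C}_{1,n}$, and the intermediate draws identical, $\widetilde{\scripted{D}}_i=\scripted{D}_i$, but \emph{re-anchors} the reverse data-generating mechanism at the forward endpoint, $\widetilde{\scripted{C}}_i\sim\beta_i(\cdot\mid\scripted{D}^\star)$, whereupon in (\ref{eq:LHS})--(\ref{eq:RHS}) the data-generating factor and the endpoint-likelihood factor simply swap. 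You instead keep the reverse run anchored at $\scripted{D}_\circ$, integrate the rejected draws into cumulative rejection factors, and implement the swap by the global unit-Jacobian translation $\widetilde{\scripted{C}}_i=\scripted{C}_i-\scripted{V}^\star$ together with $A_{\xi^\star}=A_\theta-\scripted{V}^\star$ and the matching shift of the dummy variable inside each rejection integral; under translation invariance of $\beta_i$ and flat $\eta$ this is exactly the paper's computation written in translated coordinates, and your verification of the swapped factors, the preserved normalisers $Z_i$, and the matched rejection factors is sound (your observation that the stage coupling forbids marginal term-by-term cancellation is also correct, and is precisely why the paper too works at the level of joint trajectory densities). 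The trade-off is in the hypotheses: the paper's re-anchoring trick genuinely avoids assuming $\beta_i(\scripted{C}\mid\scripted{D})=\beta_i(\scripted{C}-\scripted{D})$, so theorem (\ref{thm:ALP}) as stated covers non-translation-invariant families that your argument does not; on the other hand your flat-prior assumption is not an artefact of your method, since the paper's own term-by-term comparison of (\ref{eq:LHS}) and (\ref{eq:RHS}) leaves the unmatched ratio $\eta(\scripted{D}^\star)/\eta(\scripted{D}_\circ)$ (the $i=n$ posterior factors carry $\eta(\scripted{D}^\star)$ and $\eta(\scripted{D}_\circ)$ respectively while everything else swaps), so its claimed equality also effectively requires $\eta$ constant, consistent with the paper's closing pointer to the constant-$\eta$ exposition in \cite{Neal2003}. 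Moreover, without translation invariance the law of the displacement $\scripted{V}_n=\scripted{D}_n-\scripted{D}_\circ$ depends on the arbitrary origin, so the paper's reverse scheme ``conditioned on $\scripted{D}^\star$'' coincides with the forward kernel run from $\xi^\star$ only under precisely the equivariance you assume; making those hypotheses explicit, and identifying them as the crux, is the main thing your write-up buys.
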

\begin{proof}
The procedure in (\ref{eq:artificial_posterior}) explains how we simulate samples from $\scripted{V}^\star\sim\scripted{q}^\star(\scripted{V}^\star\mid\theta)$. However, the abstract proposal $\scripted{W}^\star\sim\scripted{q}^\star(\scripted{W}^\star\mid\xi^\star)$ in the reverse direction must be understood with care. First, the artificial vectors $\widetilde{\scripted{C}}_i$, must now be conditioned upon $\scripted{D}^\star$ instead of $\scripted{D}_\circ$, since their roles are now in reverse, which means $\widetilde{\scripted{W}}_n = \widetilde{\scripted{D}}_n - \scripted{D}^\star.$ This arrangement actually satisfy the mapping (\ref{eq:lemma1}) since letting $\widetilde{\scripted{D}}_n = \scripted{D}_\circ$ will lead to 
\begin{align*}
	\widetilde{\theta}_n 
	&= \xi^\star + \widetilde{\scripted{W}}_n \\
	&= (\theta + \scripted{V}^\star ) + (\widetilde{\scripted{D}}_n - \scripted{D}^\star) \\
	&= \theta + (\scripted{D}^\star - \scripted{D}_\circ) + (\scripted{D}_\circ - \scripted{D}^\star) = \theta
\end{align*}

Now we observe that $\forall n\in\mathbb{N}\setminus\{0\}$ and denoting $\scripted{C}_{1,n} \defeq [\scripted{C}_1,\scripted{C}_2,\dots\scripted{C}_n]$ and similarly for $\xi_{1,n-1},\scripted{\widetilde{C}}_{1,n},\widetilde{\theta}_{1,n-1}$, equation (\ref{eq:extended_symmetry}) is equivalent to 
\begin{equation*}
	\int\p{}{\scripted{C}_{1,n},\xi_{1,n-1},\xi_n = \xi^\star\mid\theta}\dif\scripted{C}_{1,n}\dif\xi_{1,n-1} = \int\p{}{\scripted{\widetilde{C}}_{1,n},\widetilde{\theta}_{1,n-1},\widetilde{\theta}_n = \theta\mid\xi^\star}\dif\scripted{\widetilde{C}}_{1,n}\dif\widetilde{\theta}_{1,n-1},	
\end{equation*} which can be guaranteed by an even stricter condition as follows
\begin{equation}\label{eq:lemma}
	\p{}{\scripted{C}_{1,n},\xi_{1,n-1},\xi_n = \xi^\star\mid\theta} = \p{}{\scripted{\widetilde{C}}_{1,n}=\scripted{C}_{1,n},\widetilde{\theta}_{1,n-1}=\xi_{1,n-1},\widetilde{\theta}_n = \theta\mid\xi^\star},
\end{equation} which means that the probability of transitioning from $\theta$ to $\xi^\star$ and vice versa are the same if the ``stepping stones", e.g. $\scripted{C}_{1,n},\xi_{1,n-1}$, are the same in both directions. 

In equation (\ref{eq:lemma}), we can put $\scripted{D}_{1,n-1},\widetilde{\scripted{D}}_{1,n-1}$ in places of $\xi_{1,n-1},\widetilde{\theta}_{1,n-1}$ respectively and the meaning of that equation would not change since these symbols would still depict the very same probabilistic event. This is because even though the starting positions in opposite directions are difference, being $\theta,\xi^\star$ respectively, the intermediately generated proposals will be identical, i.e. $$\widetilde{\theta}_i = \xi_i \Leftrightarrow	\widetilde{\scripted{D}}_i = \scripted{D}_i, \qquad\forall i = 1,\dots (n-1),$$ because
\begin{align*}
	\widetilde{\theta}_i 
	&= \xi^\star + \widetilde{\scripted{W}}_i = (\theta + \scripted{V}^\star) + (\widetilde{\scripted{D}}_i - \scripted{D}^\star)\\ 
	&= \theta + (\scripted{D}^\star - \scripted{D}_\circ) + \scripted{D}_i - \scripted{D}^\star\\
	&= \theta + (\scripted{D}_i - \scripted{D}_\circ) = \theta + \scripted{V}_i = \xi_i.
\end{align*}
That means because of the formulation of $\scripted{T}:\mat{\theta,&\scripted{V}_i}\mapsto\mat{\xi_i,&\scripted{W}_i}$ as in (\ref{eq:lemma1}), we then can rely on a statistical equivalency between the density of $\xi_i$ and $\scripted{D}_i$ in the sense that $\p{}{\xi_i\mid\theta} = \p{}{\scripted{V}_i\mid\theta} = \p{}{\scripted{D}_i\mid\scripted{D}_\circ}$, or analogously that $\p{}{\widetilde{\theta}_i\mid\xi^\star} = \p{}{\widetilde{\scripted{W}}_i\mid\xi^\star} = \p{}{\widetilde{\scripted{D}}_i\mid\scripted{D}^\star}$.
	
	The given understanding can now help us to analyse both sides of equation (\ref{eq:lemma}) as follows
		\begin{align}
			\p{}{\scripted{C}_{1,n},\xi_{1,n-1},\xi_n = \xi^\star\mid\theta} & = \prod_{i=1}^n\p{}{\scripted{C}_i\mid\theta}\p{}{\xi_i\mid\scripted{C}_{1:i},\theta},\nonumber \\
			& = \prod_{i=1}^n\p{}{\scripted{C}_i\mid\theta}\p{}{\scripted{D}_i\mid\scripted{C}_{1:i}},\nonumber \\
			& = \prod_{i=1}^n \beta_i(\scripted{C}_i\mid\scripted{D}_\circ) \frac{\eta(\scripted{D}_i)\prod_{j=1}^i\beta_j(\scripted{C}_j\mid\scripted{D}_i)}{Z_i}.\label{eq:LHS}
		\end{align}
Similarly, the right hand side of (\ref{eq:lemma}) is also expanded as follows
		\begin{align}
			\p{}{\scripted{\widetilde{C}}_{1,n},\widetilde{\theta}_{1,n-1},\widetilde{\theta}_n = \theta\mid\xi^\star}
			& = \prod_{i=1}^n\p{}{\scripted{\widetilde{C}}_i\mid\xi^\star}\p{}{\widetilde{\theta}_i\mid\scripted{\widetilde{C}}_{1:i},\xi^\star},\nonumber \\
			& = \prod_{i=1}^n \beta_i(\scripted{\widetilde{C}}_i\mid\scripted{D}^\star)\frac{\eta(\widetilde{\scripted{D}}_i)\prod_{j=1}^i\beta_j(\scripted{\widetilde{C}}_j\mid\widetilde{\scripted{D}}_i)}{Z_i}.\label{eq:RHS}
		\end{align} 
Comparing (\ref{eq:LHS}) and (\ref{eq:RHS}) term by term while noting that 
\begin{align*}
	\scripted{\widetilde{C}}_{1,n}=\scripted{C}_{1,n};\quad
	\widetilde{\theta}_{1,n-1}=\xi_{1,n-1};\quad
	\scripted{\widetilde{D}}_{1,n-1}=\scripted{D}_{1,n-1};\quad
	\scripted{D}_n = \scripted{D}^\star;\quad
	\scripted{\widetilde{D}}_n = \scripted{D}_\circ
\end{align*}
will reveal that (\ref{eq:lemma}) holds $\forall n\in\mathbb{N}\setminus\{0\}$, which means (\ref{eq:extended_symmetry}) also holds. See \citep{Neal2003} for a detailed exposition of the case $n=2$ and $\eta(\cdot)$ is a constant.
\end{proof}

In the simplest instance as mentioned by \cite{Neal2003}, $ \beta_i(\scripted{C}_i\mid\scripted{D}_\circ) $ can simply be Gaussian densities with mean $\scripted{D}_\circ = 0$, i.e. $\scripted{C}_i\sim\mathcal{N}(0,s\Sigma)$ and if $\eta(\cdot)$ is constant, then the posterior density (\ref{eq:artificial_posterior}) will be a product of Gaussian functions which can be reduced to the following density
\begin{equation}\label{eq:Gaussian_Slice}
	\scripted{V}_n\equiv\scripted{D}_n\sim\mathcal{N}\left(\frac{1}{n}\sum_{i=1}^n\scripted{C}_i,\frac{1}{n}s\Sigma\right),
\end{equation} which means $\V{}{\scripted{V}_n} = \frac{2}{n}s\Sigma,\forall n\ge 1$. Hence $\V{}{\scripted{V}_n}\to 0 $ as $ n\to\infty$,
 which makes the sequence $\xi_n$ increasingly concentrated around $\theta$ as $n\to\infty$. In this case, we can verify that the abstract density $\scripted{q}^\star(\scripted{V}^\star\mid\theta)$ actually integrates to one since $\Delta_n\to 0 $ as $n\to\infty$ and therefore
\begin{equation*}
\begin{aligned}
\int\scripted{q}^\star(\scripted{V}^\star\mid\theta)\dif\scripted{V}^\star&=\sum\limits_{n=1}^{\infty}\left(\int_\scripted{A} \scripted{q}_n(\scripted{V}^\star\mid\theta)\dif\scripted{V}^\star\prod_{i=0}^{n-1}\Delta_i\right),\quad\scripted{A}\defeq\{\scripted{V}^\star : \scripted{V}^\star + \theta \in \scripted{S} \}, \\
&=\sum\limits_{n=1}^{\infty}(1-\Delta_n)\prod_{i=0}^{n-1}\Delta_i =\sum\limits_{n=0}^{\infty} \Delta_{0:n} - \Delta_{0:(n+1)}= 1-\Delta_{0:\infty} = 1,
\end{aligned}
\end{equation*} where $ \Delta_{0:\infty} \defeq \prod_{n=0}^{\infty}\Delta_{n} = 0 $. In general, it is necessary to check that the proposal density of the artificial data $ \beta_i(\scripted{C}_i\mid\scripted{D}_\circ) $ actually leads to a well-defined density for the random vector $\scripted{V}^\star$. 

In light of the fact that Metropolis sampling is indeed identical to slice sampling without recursive proposal generation as explained by (\ref{eq:MH_Slice})--(\ref{eq:M_Slice_acc}), we can argue that the popularity of Metropolis sampling in present day despite the existence of a more advantageous sampler is a paradox in Markov chain Monte Carlo literature. Having a common framework for both sampling approaches helps illuminate the connection between Metropolis and slice sampling and therefore should encourage the proliferation of the more advanced algorithm.

In comparison with the conventional Metropolis proposal (\ref{eq:Metropolis}), the recursive proposal (\ref{eq:Gaussian_Slice}) is evidently more superior since the proposal sequence (\ref{eq:Gaussian_Slice}) is similar with (\ref{eq:Metropolis}) but the covariance matrix in (\ref{eq:Gaussian_Slice}) is two times the covariance in (\ref{eq:Metropolis}) when $n=1$ and eventually shrink to zero as $n\to\infty$. Hence the recursive sampling scheme allows ambitiously distant proposals while still guarantees that $\theta_{k+1}\neq\theta_k$ with probability one. The same strategy is applicable to any subspace of $\mathbb{R}^{n_\theta}$, or indeed any directional cross section of the ``slice" $\scripted{S}$ as we will see in the next section.

\subsection{Directional Sampling}
\label{sec:directional}

Directional sampling as described by \cite{Roberts1994} or differently by \cite{Chen1996a} encompasses a very large class of sampling strategies including \emph{differential evolution} by \cite{Braak2006} or \emph{t-walk} by \cite{Christen2010}. In this section, we aim to present the general formulation of directional sampling without going through the details of each sampler in this group. More importantly, we later show how this approach can be blended harmonically with slice sampling to result in an efficient, robust and autonomous algorithm.

In each directional sampling iteration, we generate the following random vectors
\begin{equation*}
	\scripted{V} \defeq\mat{\scripted{r} & \scripted{v}};\quad \scripted{W}\defeq\mat{\scripted{s} & \scripted{w}};\quad\scripted{v},\scripted{w}\in\mathbb{R}^{n_\theta};\quad\scripted{r},\scripted{s}\in\mathbb{R},
\end{equation*}
 and for a constant $\rho = 0,$ or $-1$, we define the mapping $\scripted{T}:\mat{\theta,&\scripted{V}}\mapsto\mat{\xi,&\scripted{W}}$ as follows
\begin{equation}\label{eq:ADS}
		\xi = \theta + \scripted{r}(\scripted{v} + \rho\theta);\quad\scripted{s} = \frac{-\scripted{r}}{1+\scripted{r}\rho};\quad \scripted{w} = \scripted{v}.
\end{equation} 
Assuming that $\scripted{r}$ has a well--defined density in $\mathbb{R}$ so that $(1+\scripted{r}\rho)\neq 0$ almost surely, then the mapping $\scripted{T}$ can be shown to be self--reverse with the following Jacobian \citep{Roberts1994} $$\vmat{J_\scripted{T}(\theta,\scripted{V})} = \vmat{1+\scripted{r}\rho}^{n_\theta - 2},\quad n_\theta\ge 2.$$

The proposal density for $\scripted{V}$ can be constructed as $\scripted{q}(\scripted{V}\mid\theta)\defeq\scripted{q}_\scripted{v}(\scripted{v})\scripted{q}_\scripted{r}(\scripted{r}\mid\scripted{v},\theta)$, where $\scripted{v} = \scripted{w}$ is independent from both $\theta$ or $\xi$, e.g. as in \citep{Roberts1994,Chen1996a}. This leads to $\scripted{q}_\scripted{v}(\scripted{v})\equiv\scripted{q}_\scripted{v}(\scripted{w})$ and therefore the acceptance probability in this case becomes
\begin{equation*}
	\alpha(\theta,\scripted{V}) = \MN{1,\frac{\pi(\xi)\scripted{q}_\scripted{r}(\scripted{s}\mid\scripted{v},\xi)}{\pi(\theta)\scripted{q}_\scripted{r}(\scripted{r}\mid\scripted{v},\theta)}\vmat{1+\scripted{r}\rho}^{n_\theta - 2}}.
\end{equation*}

When $\scripted{v}$ is chosen as the difference between two other randomly selected parallel Markov chains targeting the same density $\pi(\theta)$, as described by \cite[section 2.7]{Gilks1994}, the performance of directional sampling is invariant with respect to any affine transformation of the coordinates of $\mathbb{R}^{n_\theta}$ \citep[section 2.5]{Gilks1994}. This means there is no need for adaptively tuning the sampling directions by e.g. approximating the global covariance matrix of $\pi(\cdot)$, which is not robust with respect to multimodal or high dimensional densities. We now seek to combine this advantage with the recursive sampling scheme in slice sampling.
 
\subsection{Directional Slice Sampling}
A combination between directional sampling and slice sampling will be significantly more robust than either of the two individual algorithms since directional sampling is an affine invariant algorithm, while difficulties in estimating the covariance of $\pi(\cdot)$ is the common drawback of many Markov chain Monte Carlo algorithms, such as adaptive Metropolis sampling \citep{Haario2001}, elliptical or factorial slice sampling \citep{Nishihara2014a,Tibbits2013}. Meanwhile, slice sampling helps scaling the proposal density automatically to guarantee unity acceptance probability. Hence there is also no need for optimal scaling strategies such as those given by \cite{Roberts2001} for conventional Metropolis--Hastings sampling.

We now make the necessary choices in the general formulation of directional sampling to arrive at a special case that can be combined with slice sampling. Specifically, we set $\rho=0$ so that the Jacobian is $\vmat{1+\scripted{r}\times 0}^{n_\theta - 2} = 1,$ and the mapping $\scripted{T}:\mat{\theta,&\scripted{r}}\mapsto\mat{\xi,&\scripted{s}}$ is reduced to
\begin{equation}\label{eq:PADS}
		\xi = \theta + \scripted{r}\scripted{v};\quad\scripted{s} = -\scripted{r}, 
\end{equation} where the random direction $\scripted{v}\equiv\scripted{w}$ is notationally considered as a constant since it remains unchanged in each sampling iteration. Again, $\scripted{v}$ is randomly constructed from other parallel Markov chains according to \cite[section 2.7]{Gilks1994}. Now since directional sampling is a special case of theorem (\ref{thm:MH}), we can embed it into the Gibbs sampling scheme (\ref{eq:slice}) to draw samples from the uniform density $\mathbb{1}_\scripted{S}(\theta)$. This combination is called directional slice sampling.

Since all necessary conditions for recursive proposal generation in algorithm (\ref{alg:IPG}) are now available, we can construct a special proposal density $\scripted{r}^\star\sim\scripted{q}^\star_\scripted{r}(\scripted{r}^\star\mid\theta)$ that guarantees an unity acceptance probability using a recursive scheme similar to (\ref{eq:artificial_posterior}) as follows 
\begin{equation}\label{eq:artificial_posterior2}
	\scripted{r}_n\mid\scripted{C}_{1:n} \sim \frac{\prod_{i=1}^n\p{}{\scripted{C}_i\mid\scripted{r}_n}}{Z_n};\quad\scripted{C}_i\sim\beta_i(\scripted{C}_i\mid\scripted{D}_\circ\defeq 0);\quad\scripted{C}_i\in\mathbb{R}.
\end{equation}
By noticing that the mapping given in (\ref{eq:PADS}) is simply a univariate version of (\ref{eq:lemma1}) projected onto the direction $\scripted{v}$, we can show that condition (\ref{eq:extended_symmetry}) in theorem (\ref{thm:IPG}) holds for the sampling scheme (\ref{eq:artificial_posterior2}) according to the same line of reasoning as seen in theorem (\ref{thm:ALP}). 

Again, the densities $\beta_i(\scripted{C}_i\mid\scripted{D}_\circ)$ can be Gaussian functions. We can also use another alternative given by \cite{Neal2003} to result in uniform posterior densities for $\scripted{r}_n \mid\scripted{C}_{1:n}$ instead. Specifically, the random vectors $\scripted{C}_i$ are constructed as random intervals on the real line, i.e. $ \scripted{C}_i := [a_i,b_i]\subset\mathbb{R},$ where $[a_i,b_i]$ are generated by using algorithm (\ref{alg:expand_slice}), which can be shown to be a special case of (\ref{eq:artificial_posterior2}) by \cite[section V]{Tran2015b} or proven differently by \cite[section 4.3]{Neal2003}. 

In conclusion, the general directional slice sampling algorithm can be given as algorithm (\ref{alg:directional_slice}) where we always have the freedom to chose different random sampling schemes for $\scripted{v}\sim\scripted{q}_\scripted{v}(\scripted{v})$ and $\beta_i(\scripted{C}_i\mid\scripted{D}_\circ)$. This algorithm is indeed a perfect choice of Markov chain Monte Carlo kernel in the sequential Monte Carlo approach, described by \cite{Chopin2002}, since it is inherently a multi-chains sampling approach that does not require estimating the covariance matrix of its target density; it is also free from manual tuning and has unity acceptance probability.
\begin{algorithm}
	\caption{Automatic Expanding/Shrinking Procedure}
	\label{alg:expand_slice}
	First, we generate $[a_1,b_1]$ as follows
	\begin{enumerate}
		\item Setting $ a_1 = u-1 ;\;	b_1 = u $ for some $ u\sim\mathcal{U}[0,1] $
		\item While $ (\theta_{k}+a_1 \scripted{v}_k)\in\scripted{S} $, adjust $ a_1 \gets  a_1 - 1 $
		\item While $ (\theta_{k}+b_1\scripted{v}_k)\in\scripted{S}$, adjust $ b_1 \gets  b_1 + 1 $
	\end{enumerate}
	Second, we adaptively generate $[a_n,b_n]$, starting with $n=1$, by shrinking $[a_1,b_1]$ as follows
	\begin{enumerate}
		\item Sample $ \scripted{r}_n\sim\mathcal{U}[a_n,b_n] $ and compute $ \xi_n = \theta_k + \scripted{r}_n \scripted{v}_k $ \label{st:sampling_i}
		\item Set $ \theta_{k+1} = \xi_n $ if $ \xi_n\in\scripted{S}$ and terminate the routine
		\item Otherwise, repeat step (\ref{st:sampling_i}) with $ [a_{n+1},b_{n+1}] $ generated by
		\begin{enumerate}
			\item If $ \scripted{r}_{n}\ge 0 $, then set $ b_{n+1} =\scripted{r}_{n};\;	a_{n+1} = a_{n} $
			\item Otherwise, set $	b_{n+1} = b_{n};\;	a_{n+1} =\scripted{r}_{n} $
		\end{enumerate}
	\end{enumerate}
\end{algorithm} 
\begin{algorithm}
	\caption{Directional Slice Sampling}
	\label{alg:directional_slice}
	Let $\theta_k\sim\pi(\cdot) $ and starting with $ n=1 $, we produce $\theta_{k+1}\sim\pi(\cdot)$ by:
\begin{enumerate}
	\item Drawing 
	\begin{equation*}
		\begin{aligned}
			\scripted{h}_k &\sim\mathcal{U}[0,\pi(\theta_k)];&\scripted{v}_k &\sim\scripted{q}_\scripted{v}(\scripted{v}_k);\\
			\scripted{r}_n\mid\scripted{C}_{1:n} &\sim \frac{\prod_{i=1}^n\p{}{\scripted{C}_i\mid\scripted{r}_n}}{Z_n};&\scripted{C}_i &\sim\beta_i(\scripted{C}_i\mid\scripted{D}_\circ\defeq 0)
		\end{aligned}
	\end{equation*}
	\item Set $\theta_{k+1} \gets \xi_n \defeq (\theta_k + \scripted{r}_n\scripted{v}_k) $ if $ \pi(\xi_n)\ge \scripted{h}_k$
	\item Otherwise, regenerate $\scripted{r}_n$ with $ {n\leftarrow n+1} $
\end{enumerate}
\end{algorithm}
 
\subsection{Langevin and Hamiltonian Monte Carlo Sampling}
There has long been a growing interest in Markov chain Monte Carlo algorithms that exploits local gradient information to construct fast mixing proposals as seen from \cite{Roberts2002,Girolami2011,Neal2012,Hoffman2014}. Recently, automatic differentiation technology is getting really mature so that it is now practical to compute the gradient of any smooth target density to exact machine precision, within a small multiples of the computing cost for the original density, e.g. using the Stan math library by \cite{Carpenter2015}. There are also situations when the gradient of $\pi(\theta)$ is analytically tractable so that gradient--based algorithms can be even more efficient. Metropolis--adjusted Langevin and Hamiltonian Monte Carlo algorithms are two popular options in this class of algorithms.

In Metropolis--adjusted Langevin algorithm, the proposal density is constructed from a discretised solution to the Langevin diffusion equation in physics \citep{Kennedy1990} as follows
\begin{equation}\label{eq:Langevin} 
\xi = \theta+\scripted{v};\quad\scripted{v}\mid\theta \sim\mathcal{N}\left(\dfrac{\scripted{r}^2}{2}\Sigma\nabla\log\pi(\theta),\scripted{r}^2\Sigma\right) ,
\end{equation} where $\Sigma$ is often chosen e.g. as a constant diagonal matrix by \cite{Neal2012} or as the inverse observed information matrix by \cite{Girolami2011}, i.e. $\Sigma \defeq\left[-\nabla^2\log\pi(\theta)\right]^{-1}$, which equates to the covariance matrix of $\pi(\theta)$ in case the target is simply a Gaussian density. 

This algorihtm can be seen as either an instance of conventional Metropolis--Hastings or, alternatively, a truncated version of Hamiltonian Monte Carlo with a fictional random momentum $\scripted{v}\mid\theta\sim\mathcal{N}(0,\Sigma^{-1})$ \citep{Girolami2011}. When $\Sigma$ is well matched with the geometry of $\pi(\theta)$, the given choice of covariance $\Sigma^{-1}$ for the momentum vector $\scripted{v}$ has long been motivated in molecular dynamics by \cite{Bennett1975} to compensate for spatial correlations in the target density. Specifically, the proposal (\ref{eq:Langevin}) is equivalent to the following Hamiltonian proposal
\begin{equation*}
	\begin{aligned}
		\xi &= \theta + \frac{\scripted{r}^2}{2}\Sigma\nabla\log\pi(\theta)+\scripted{r}\Sigma\scripted{v},\quad\scripted{v}\mid\theta\sim\mathcal{N}(0,\Sigma^{-1}), \\
		\scripted{w} &= \scripted{v} + \frac{\scripted{r}}{2}\nabla\log\pi(\theta)+ \frac{\scripted{r}}{2}\nabla\log\pi(\xi), 
	\end{aligned}
\end{equation*} which is indeed a one step discretised solution to (\ref{eq:Hamiltonian}), see \citep[section 5.2]{Neal2012} for details. 

Therefore, we will now consider Hamiltonian sampling only while noting that Metropolis--adjusted Langevin algorithm is a special case. In Hamiltonian Monte Carlo sampling, the proposal $\xi$ is also constructed from the Hamiltonian dynamical equations (\ref{eq:Hamiltonian}), whose solution for some arbitrary interval $t\in[0,\scripted{r}]$, starting at $\theta(t=0) \defeq \theta;\;\scripted{v}(t=0)\defeq \scripted{v} ,$ is a deterministic trajectory through the space of $\mat{\theta, & \scripted{v} }\in\mathbb{R}^{n_\theta}\times\mathbb{R}^{n_\scripted{v}}$ that terminates at 
\begin{equation*}
	\theta(t=\scripted{r}) \defeq \xi;\quad\scripted{v}(t=\scripted{r}) \defeq \scripted{w}.
\end{equation*}

This trajectory naturally defines a mapping $\scripted{T}:\mat{\theta, & \scripted{v}, & \scripted{r}} \mapsto \mat{\xi, & \scripted{w}, & -\scripted{r}}$ which, according to \cite{Neal2012}, can be shown to be self--reverse with unity Jacobian, even when the mapping is computed by discretising the dynamical equations (\ref{eq:Hamiltonian}) using e.g. leapfrog method as described by \cite{Leimkuhler2005}. Therefore, we can collect the associated random vectors into $$\scripted{V}\defeq\mat{\scripted{v}, & \scripted{r}};\quad\scripted{W}\defeq\mat{\scripted{w}, & \scripted{s}};\quad\scripted{s} = -\scripted{r},$$ and design a symmetrical distribution for $ \scripted{r} $ such that 
\begin{equation}\label{eq:r_symmetry}
	\scripted{q}_\scripted{r}(\scripted{r}\mid\theta,\scripted{v}) = \scripted{q}_\scripted{r}(\scripted{s} \mid\xi,\scripted{w})= \scripted{q}_\scripted{r}(-\scripted{r}\mid\xi,\scripted{w}),
\end{equation}
so the acceptance probability for this algorithm can be computed as
\begin{equation}
\alpha(\theta,\scripted{V}) = \MN{1,\dfrac{\pi(\xi)}{\pi(\theta)}\dfrac{\scripted{q}_\scripted{v}(\scripted{w}\mid\xi)}{\scripted{q}_\scripted{v}(\scripted{v}\mid\theta)}}.
\end{equation}

When the mapping $\scripted{T}:\mat{\theta, & \scripted{v}, & \scripted{r}} \mapsto \mat{\xi, & \scripted{w}, & -\scripted{r}}$ is analytically tractable, the acceptance probability will be identically unity since the Hamiltonian quantity $ H(\theta,\scripted{v})\defeq -\log\left( \pi(\theta)\scripted{q}_\scripted{v}(\scripted{v}\mid\theta) \right)$ will be preserved. Therefore the joint density of $\mat{\theta,&\scripted{v}}$ is also invariant along this trajectory, i.e.
\begin{equation} \label{eq:H_preservation}
\pi(\theta)\scripted{q}_\scripted{v}(\scripted{v}\mid\theta) = \pi(\xi)\scripted{q}_\scripted{v}(\scripted{w}\mid\xi),\;\forall\scripted{r}\in\mathbb{R}.
\end{equation}

The property (\ref{eq:H_preservation}) will not hold when $\scripted{T}$ is approximated by a time discretisation scheme. Hence, the acceptance probability will be lowered. In discretised Hamiltonian Monte Carlo, besides tuning the covariance matrix $\Sigma^{-1}$ for the fictional momentum density $ \scripted{v}\mid\theta\sim\mathcal{N}(0,\Sigma^{-1}) $, we also need to tune the integration time $\scripted{r}$ and a discretisation step size to minimise computation cost, while maximising the expected distance between $\theta$ to $\xi$ and maintaining an average acceptance rate near the theoretical optimum of 0.65 \citep{Beskos2013a} or up to 0.85 using the ``windows of states" method given by \cite{Neal1994,Neal2012}. The necessary tuning is automated by \cite{Hoffman2014} and eventually implemented in a general purpose, free and open-source  program for Bayesian inference or optimisation named Stan \citep{Carpenter2016}.

\subsection{Elliptical Hamiltonian Slice Sampling}\label{sec:HSS}

When feasible, exact Hamiltonian dynamics solution is a fast and effective approach to construct the necessary proposal in Markovian sampling. Meanwhile, slice sampling can always help guarantee unity acceptance probability. Therefore, it is advantageous to combine these two approaches to result in algorithms with little to no tuning that can be applied in other contexts. For example, the sequential Monte Carlo approach described by \cite{Chopin2002} requires a Markov kernel with as little tuning as possible since only one or a few Markovian sampling operation is performed before changing the target density to the next density in a sequence.

While the true target $\pi(\theta)$ does not always produce an analytical solution for the Hamiltonian trajectory, we can always replace $\pi(\theta)$ in the Hamiltonian equation (\ref{eq:Hamiltonian}) with a Gaussian approximation $\psi(\theta)\equiv\mathcal{N}(\mu,\Sigma)$ such that, given the fictional momentum $\scripted{v}\mid\theta\sim\mathcal{N}(0,\Sigma^{-1})$, the approximate Hamiltonian term is defined up to some constant as 
\begin{equation}\label{eq:approx_Hamiltonian}
	\widetilde{\scripted{H}}(\theta,\scripted{v}) \defeq - \log\left(\psi(\theta)\scripted{q}_\scripted{v}(\scripted{v}\mid\theta)\right)
= \frac{1}{2}\left([\theta-\mu]^\mathsf{T}\Sigma^{-1}[\theta-\mu] + \scripted{v}^\mathsf{T}\Sigma\scripted{v}\right),
\end{equation} which results in the following closed--form Hamiltonian trajectory (see appendix \ref{appx:Hamiltonian})
\begin{equation}\label{eq:ellipse}
\begin{aligned}
	\xi &= [\theta-\mu]\cos(\scripted{r}) + \Sigma\scripted{v}\sin(\scripted{r}) + \mu ;\quad\forall\scripted{r}\in\mathbb{R},\\
		\scripted{w} &= \scripted{v}\cos(\scripted{r}) - \Sigma^{-1}[\theta-\mu]\sin(\scripted{r}).
\end{aligned}
\end{equation}

We now present some known and also novel approaches to perform slice sampling on the trajectory (\ref{eq:ellipse}). Let us first start with the initial observation by \cite{Neal1999} while studying some Gaussian process models with the target density having the popular form 
\begin{equation}\label{eq:tilde_pi}
	\pi(\theta) = \psi(\theta)\scripted{L}(\theta),
\end{equation} where $\scripted{L}(\theta)$ is a fairly flat likelihood function in comparison with the prior density $\psi(\theta)$, which therefore also captures the geometry of the posterior density $\pi(\theta)$ well. The prior density is often chosen as a Gaussian function $\mathcal{N}(\mu,\Sigma)$, which induces the approximate Hamiltonian $\widetilde{\scripted{H}}(\theta,\scripted{v})$ in the form of (\ref{eq:approx_Hamiltonian}) and the closed--form trajectory (\ref{eq:ellipse}). If the mapping (\ref{eq:ellipse}), which is self--reverse with unity Jacobian, is used to compute the next proposal $\xi$, while the density $\scripted{q}_\scripted{r}(\scripted{r}\mid\theta,\scripted{v})$ satisfies condition (\ref{eq:r_symmetry}), then the approximate Hamiltonian term is also preserved, i.e.
\begin{equation*}
	\psi(\theta)\scripted{q}_\scripted{v}(\scripted{v}\mid\theta) = \psi(\xi)\scripted{q}_\scripted{v}(\scripted{w}\mid\xi),
\end{equation*}
and the acceptance probability will become 
\begin{equation*}
	\alpha(\theta,\scripted{V}) = \MN{1,\frac{\psi(\xi)\scripted{L}(\xi)\scripted{q}_\scripted{v}(\scripted{w}\mid\xi)}{\psi(\theta)\scripted{L}(\theta)\scripted{q}_\scripted{v}(\scripted{v}\mid\xi)}} = \MN{1,\frac{\scripted{L}(\xi)}{\scripted{L}(\theta)}}.
\end{equation*} 
This acceptance probability can be closer to unity, in comparison with plain Metropolis--Hastings, if $\scripted{L}(\theta)$ is a fairly flat likelihood function, i.e. $\psi(\theta)$ is a good approximation to $\pi(\theta)$. 

The proposal (\ref{eq:ellipse}), which originally presented by \cite{Neal1999}, was developed further by \cite{Murray2010} to recursively generate the fictional integration time $\scripted{r}^\star$ using slice sampling. Specifically, \cite{Murray2010} introduced an auxiliary variable, slightly different from conventional slice sampling, as $
	\scripted{h}\mid\theta\sim\mathcal{U}[0,\scripted{L}(\theta)]$, so that the joint density of $\theta,\scripted{v}$ and $\scripted{h}$ becomes
\begin{equation*}
	\p{}{\theta,\scripted{v},\scripted{h}} = \psi(\theta)\scripted{q}_\scripted{v}(\scripted{v}\mid\theta) = \exp(-\widetilde{\scripted{H}}(\theta,\scripted{v})) ;\quad\text{subjected to }\scripted{h}\le\scripted{L}(\theta).
\end{equation*}
We can draw samples from this density using the so called elliptical slice sampling algorithm given by \cite{Murray2010}, which essentially is the following Gibbs sampling scheme
\begin{align}
	\scripted{h}\mid\theta,\scripted{v} &\sim\mathcal{U}[0,\scripted{L}(\theta)],\nonumber \\
	\scripted{v}\mid\theta,\scripted{h} &\sim\mathcal{N}(0,\Sigma^{-1}),\nonumber\\
	\theta,\scripted{v}\mid\scripted{h} &\sim \exp(-\widetilde{\scripted{H}}(\theta,\scripted{v}));\quad\text{subjected to }\scripted{h}\le\scripted{L}(\theta).\label{eq:elliptical_slice}
\end{align}
To sample from (\ref{eq:elliptical_slice}), \cite{Murray2010} compute the proposal $\mat{\xi,&\scripted{w}}$ using (\ref{eq:ellipse}) and generate $\scripted{r}^\star\sim\scripted{q}^\star_\scripted{r}(\scripted{r}\mid\theta)$ using a simplified version of algorithms (\ref{alg:expand_slice}), where the slice is defined as 
\begin{equation*}
	\scripted{S}\defeq\{\mat{\theta,&\scripted{v}} : \scripted{L}(\theta)\ge\scripted{h}\}.
\end{equation*} 
See \citep[figure 2]{Murray2010} or algorithm (\ref{alg:HSS}) for details. As with slice sampling, the acceptance probability is unity again since the abstract density $\scripted{q}^\star_\scripted{r}(\scripted{r}\mid\theta)$ is symmetrical as in (\ref{eq:IPG_symmetry}) and the mapping (\ref{eq:ellipse}) preserves the Hamiltonian term $\widetilde{\scripted{H}}(\theta,\scripted{v})$ that appears in the target (\ref{eq:elliptical_slice}) while also results in a unity Jacobian. This explanation, which is rather straightforward in comparison with the original texts, is another benefit of the general framework provided in theorem (\ref{thm:MH}).

\cite{Nishihara2014a} provide a more general way to reformulate any generic target $\pi(\theta)$ in the form of (\ref{eq:tilde_pi}) by letting $\psi(\theta)$ be the best approximation to the target $\pi(\theta)$ and simply define $\scripted{L}(\theta)\defeq {\pi(\theta)}/{\psi(\theta)}$. To ensure the stability of the artificial function $\scripted{L}(\theta)$ in the tail--region of the target density, \cite{Nishihara2014a} also propose to construct $\psi(\theta)$ as a t--distribution given by
\begin{equation}\label{eq:t_dist}
	\psi(\theta) \defeq\int_0^\infty\phi(\theta\mid s)\mathsf{IG}(s : \frac{n}{2},\frac{n}{2})\dif s,
\end{equation} where $n$ is a chosen degree of freedom, $\mathsf{IG}(s : a,b)$ stands for the inverse--gamma density and $\phi(\theta\mid s)$ is the Gaussian density $\mathcal{N}(\mu,s\Sigma)$ so that for any given parameter $s$, elliptical slice sampling is still applicable to sampling from the density $\p{}{\theta\mid s}$. Accordingly, an efficient Gibbs sampling scheme for the joint density $\p{}{\theta,s}$ is also given in \citep[algorithm 2]{Nishihara2014a}, which is completely autonomous given that the free parameters $n,\mu,\Sigma$ are tuned by expectation maximisation \citep[algorithm 4]{Liu1995,Nishihara2014a}.

Taking advantage of the general framework, we can propose a simplification to the elliptical slice sampling algorithm in (\ref{eq:elliptical_slice}) that avoids the need of constructing and stabilising an artificial likelihood function $\scripted{L}(\theta)$ as follows. First we introduce the auxiliary slice variable differently as 
\begin{equation*}
	\scripted{h}\mid\theta,\scripted{v} \sim\mathcal{U}[0,\exp(-\scripted{H}(\theta,\scripted{v}))];\quad\scripted{H}(\theta,\scripted{v}) \defeq -\log\left( \pi(\theta)\scripted{q}_\scripted{v}(\scripted{v}\mid\theta) \right),
\end{equation*} which leads to $\pi(\theta)$ being a marginal density of a different joint density as follows $$\p{}{\theta,\scripted{v},\scripted{h}} = \mathbb{1}_\Upsilon(\theta,\scripted{v},\scripted{h});\quad\Upsilon\defeq\{\theta,\scripted{v},\scripted{h} : \log\scripted{h}\le-\scripted{H}(\theta,\scripted{v})\}.$$
We can sample from this joint density using the following Gibbs sampling scheme
\begin{align}
	\scripted{h}\mid\theta,\scripted{v} &\sim\mathcal{U}[0,\exp(-\scripted{H}(\theta,\scripted{v}))],\nonumber \\
	\scripted{v}\mid\theta,\scripted{h} &\sim\mathbb{1}_{\scripted{S}^\scripted{v}}(\scripted{v}); &\scripted{S}^\scripted{v} &\defeq\{\scripted{v} : \log\scripted{h}\le-\scripted{H}(\theta,\scripted{v})\},\label{eq:Hamiltonian_slice1}\\
	\theta,\scripted{v}\mid\scripted{h} &\sim\mathbb{1}_\scripted{S}(\theta,\scripted{v}); &\scripted{S} &\defeq \{\mat{\theta,&\scripted{v}} : \log\scripted{h}\le-\scripted{H}(\theta,\scripted{v})\}.\label{eq:Hamiltonian_slice2}
\end{align} 
We can draw exact i.i.d. samples from (\ref{eq:Hamiltonian_slice1}) as described in appendix (\ref{appx:uniform_ellipsoid}), while the joint conditional density (\ref{eq:Hamiltonian_slice2}) can be solved using the mapping (\ref{eq:ellipse}) where the random integration time $\scripted{r}^\star\sim\scripted{q}^\star_\scripted{r}(\scripted{r}\mid\theta)$ is recursively generated as in algorithm (\ref{alg:HSS}), with the slice $\scripted{S}$ defined differently as in (\ref{eq:Hamiltonian_slice2}). If $\widetilde{\scripted{H}}(\theta,\scripted{v})$ is a good approximation to the true Hamiltonian $\scripted{H}(\theta,\scripted{v})$, then $\scripted{H}(\theta,\scripted{v})$ will be approximately invariant along the elliptical trajectory (\ref{eq:ellipse}). Therefore, this trajectory may carry the proposal $\xi$ to a far distance from $\theta$, without falling out of the slice $\scripted{S}$.
\begin{remark}
	The No--U--Turn sampler given by \cite{Hoffman2014} is, at its core, an analogous version of the Gibbs sampling scheme (\ref{eq:Hamiltonian_slice2}) where the elliptical trajectory (\ref{eq:ellipse}) is replaced with the discretised solution of the original Hamiltonian equations (\ref{eq:Hamiltonian}). Instead of using the ``stepping out" approach seen in algorithm (\ref{alg:expand_slice}) to explore the range of the slice $\scripted{S}$ given in (\ref{eq:Hamiltonian_slice2}), \cite{Hoffman2014} designed a discretised adaptation of the ``doubling" procedure, which is yet another special case of (\ref{eq:artificial_posterior2}) originally given by \cite{Neal2003}. 
\end{remark}
\begin{algorithm}
	\caption{Hamiltonian Slice Sampling}
	\label{alg:HSS}
	First, we set $ a_1 = u-2\pi ;\;	b_1 = u $ for some $ u\sim\mathcal{U}[0,2\pi] .$ Then, we adaptively generate $[a_n,b_n]$, starting with $n=1$, by shrinking $[a_1,b_1]$ as follows
	\begin{enumerate}
		\item Sample $ \scripted{r}_n\sim\mathcal{U}[a_n,b_n] $ and compute $\scripted{T}:\mat{\theta_k, & \scripted{v}_k, & \scripted{r}_n} \mapsto \mat{\xi_n, & \scripted{w}_n, & \scripted{s}_n}$ using (\ref{eq:ellipse}) \label{st:sampling_i}
		\item Set $ \mat{\theta_{k+1}, & \scripted{v}_{k+1}} \gets \mat{\xi_n, & \scripted{w}_n}$ if $ \mat{\xi_n, & \scripted{w}_n}\in\scripted{S}$ and terminate the routine
		\item Otherwise, repeat step (\ref{st:sampling_i}) with $ [a_{n+1},b_{n+1}] $ generated by
		\begin{enumerate}
			\item If $ \scripted{r}_{n}\ge 0 $, then set $ b_{n+1} =\scripted{r}_{n};\;	a_{n+1} = a_{n} $
			\item Otherwise, set $	b_{n+1} = b_{n};\;	a_{n+1} =\scripted{r}_{n} $
		\end{enumerate}
	\end{enumerate}
\end{algorithm} 

The so called Hamiltonian slice sampling scheme (\ref{eq:Hamiltonian_slice2}) is perhaps a simpler alternative to the generalised elliptical slice sampling algorithm given by \cite{Nishihara2014a}. In comparison, both algorithms need to estimate $\mu,\Sigma$ while Hamiltonian slice sampling is naturally insensitive to heavy--tailed densities, hence no need to use the t--distribution (\ref{eq:t_dist}). These algorithms remain valid if $\Sigma$ is the inverse observed information matrix as suggested by \cite{Girolami2011}, if this choice is indeed advantageous in some applications. As we will see in section (\ref{sec:PHMC}), these Hamiltonian slice sampling schemes remain applicable even when $\pi(\theta)$ is intractable but can be unbiasedly estimated, e.g. using particle filtering. We first give an introductory overview of the pseudo marginal sampling approaches in the next section.
\subsection{Pseudo Marginal Metropolis--Hastings}
For some Bayesian models, the target density involves a likelihood function that is intractable but can be unbiasedly approximated using either importance sampling \citep{Herman1951,Richard2007} or particle filtering \citep{Gordon1993,Doucet2000} algorithms. One popular example of this model class is the state space model discussed in appendix (\ref{appx:ABC}). 

In general, assuming that we want to draw samples from an intractable density $\pi(\theta)$ which can be unbiasedly approximated by a tractable function $\widehat{\pi}(\theta,\mathbb{u})$ in the sense that 
\begin{equation}
  \E{\mathbb{u}\mid\theta}{\widehat{\pi}(\theta,\mathbb{u})}\defeq\int\widehat{\pi}(\theta,\mathbb{u})\gamma_{\mathbb{u}}(\mathbb{u}\mid\theta)\dif\mathbb{u} = \pi(\theta), 
\end{equation} where $\mathbb{u}\mid\theta\sim\gamma_{\mathbb{u}}(\mathbb{u}\mid\theta)$ is some possibly intractable density that we can simulate from. To obtain $\theta_k\sim\pi(\cdot)$, \cite{Andrieu2009a} propose to sample instead from the joint density $\varphi(\theta,\mathbb{u})$, which admits $\pi(\theta)$ as its marginal density, constructed as $\varphi(\theta,\mathbb{u}) = \widehat{\pi}(\theta,\mathbb{u})\gamma_{\mathbb{u}}(\mathbb{u}\mid\theta)$.

Now we can actually use the conventional Metropolis--Hastings algorithm \citep{Roberts2004a} to sample from $\varphi(\theta,\mathbb{u})$ with the following proposal density
\begin{equation}
  \gamma(\xi,\mathbb{w}\mid\theta,\mathbb{u})\defeq\gamma_{\mathbb{u}}(\mathbb{w}\mid\xi) \gamma_\theta(\xi\mid\theta),
\end{equation} which leads to the acceptance probability written in conventional Metropolis notation as follows
\begin{equation}\label{eq:PMMH_acc}
	\begin{aligned}
		\alpha(\xi,\mathbb{w}\mid\theta,\mathbb{u}) &\defeq \MN{1, \frac{\gamma(\theta,\mathbb{u}\mid\xi,\mathbb{w})}{\gamma(\xi,\mathbb{w}\mid\theta,\mathbb{u})}\frac{\varphi(\xi,\mathbb{w})}{\varphi(\theta,\mathbb{u})} } = \MN{1, \frac{\gamma_\theta(\theta\mid\xi)}{\gamma_\theta(\xi\mid\theta)}\frac{\widehat{\pi}(\xi,\mathbb{w})}{\widehat{\pi}(\theta,\mathbb{u})} }.
	\end{aligned}
\end{equation} 

Operationally, this sampling scheme appears as if we simply replace the intractable density $\pi(\theta)$ with an unbiased estimator $\widehat{\pi}(\theta,\mathbb{u})$ in a plain Metropolis--Hastings algorithm. The user choice of the proposal density $\gamma_\theta(\xi\mid\theta)$ is therefore becomes the target of intense studies and innovations, see \citep{Dahlin2016b,Lee2008} for examples.

When the proposal density $\gamma_\theta(\xi\mid\theta)$ is symmetrical, the acceptance rate will be simply $$\alpha(\xi,\mathbb{w}\mid\theta,\mathbb{u}) = \MN{1, \frac{\widehat{\pi}(\xi,\mathbb{w})}{\widehat{\pi}(\theta,\mathbb{u})} },$$ which is a noisy approximation to the true Metropolis acceptance rate (\ref{eq:Metropolis_acc}) due to the random variation in $\mathbb{v,w}$. This effect can cause persistent rejections, which is often called sticking, in pseudo marginal random walk Metropolis, even when $\pi(\xi)\approxeq\pi(\theta)$. Therefore to avoid sticking, it is immediately clear that we need to minimise the variance of $\widehat{\pi}(\theta,\mathbb{u})$ with respect to the random variation of $\mathbb{u}\mid\theta$. This objective can be achieved by simply increasing the number of samples required to compute the estimator $\widehat{\pi}(\theta,\mathbb{u})$ at the cost of linearly increasing the computation time. 

\cite{Pitt2012,Doucet2015a} suggest that the number of required samples is chosen such that $\V{\mathbb{u}\mid\theta}{\log\widehat{\pi}(\theta,\mathbb{u})}\in[0.5^2,1.5^2]$ to concurrently minimise both the factor of inefficiency $\tau_{\hat{f}}$ in (\ref{eq:convergence}) and the computation time. Also under different assumptions and taking into account the diffusion speed of the resulting chain $\{\theta_k\}_{k=1}^M$ measured by $\mathsf{E}\norm{\theta_{k+1}-\theta_k}^2$, instead of the factor $\tau_{\hat{f}}$, \cite{Sherlock2015} conclude that the optimal value of $\V{\mathbb{u}\mid\theta}{\log\widehat{\pi}(\theta,\mathbb{u})}\approx 3.3$, while the corresponding optimal acceptance rate of pseudo marginal random walk Metropolis algorithm is also found to be approximately $7\%$, instead of $23.4\%$ as with conventional Metropolis. 

Additionally, the performance of the pseudo marginal Metropolis--Hastings approach can be significantly improved by either using specialised design for the estimator $\widehat{\pi}(\theta,\mathbb{u})$, e.g. using auxiliary particle filter \citep{Pitt1999,Pitt2012}, or creating artificial correlation between $\mathbb{u}$ and $\mathbb{w}$ in consecutive iterations by altering the proposal density $\gamma_{\mathbb{u}}(\mathbb{w}\mid\xi,\mathbb{u})$ as studied by \cite{Deligiannidis2015,Dahlin2015b,Jacob2016}.

The random walk Metropolis proposal (\ref{eq:Metropolis}) is not only prone to sticking in the pseudo marginal context but also inherently inefficient when the number of dimensions $n_\theta$ increase. One preferable alternative is the discretised Langevin diffusion proposal given in (\ref{eq:Langevin}), which takes advantage of the log-likelihood gradients to guide the proposal to region of higher density. 

The required gradients can be approximated using a varieties of methods as studied by \cite{Dahlin2013,Nemeth2014a}. The observed information matrix can also be approximated at the same time with the gradient vector if we chose to set the matrix $\Sigma$ in (\ref{eq:Langevin}) as the inverse observed information matrix in each iteration. 

The main motivation for modulating the Langevin proposal (\ref{eq:Langevin}) using the observed information matrix is that the resulting algorithm is scale invariant, e.g. the resulting Markov chain has larger jumps in flat regions and smaller jumps in sharp regions of the target density as seen in \citep[figure 1]{Girolami2011}, which can be advantageous during burn--in.

\cite{Poyiadjis2011,Dahlin2015,Dahlin2015a,Nemeth2015} study methods to approximate the observed information matrix along with the gradient vectors and find that the diffusion speed of the chain is improved during both burn--in and stationary phases. However, \cite{Nemeth2014} show that this computational efficiency gain strongly depends on the accuracy of the estimation of the gradients, especially when $n_\theta$ increases. Therefore, we present a derivative--free approach in the next section to complement these aforementioned pseudo marginal approaches. 

\subsection{Pseudo Marginal Hamiltonian Slice Sampling}\label{sec:PHMC}
Now if the pseudo marginal Markov chain is in stationary setting or if the target density is approximately Gaussian then using the global covariance matrix of $\pi(\theta)$ in constructing the Langevin proposal (\ref{eq:Langevin}) could enable the same level of computational efficiency as with the local information matrix. This global covariance matrix can simply be approximated either by using the complete history of the Markov chain, e.g. as in (\ref{eq:AM_cov}) according to \cite{Andrieu2008}, or the population of parallel Markov chains available in a sequential Monte Carlo context described by \cite{Chopin2013}. This global covariance matrix can be a computationally cheap and statistically efficient alternative to the local information matrix.
 
Since discretised Langevin proposal is indeed a special case of Hamiltonian Monte Carlo sampling, and if the motivation given above for using the global covariance matrix is appropriate, then we should also take advantage of the elliptical Hamiltonian trajectory (\ref{eq:ellipse}) to construct the proposal in this pseudo marginal setting. First, we introduce the fictional momentum vector
\begin{equation*}
	\scripted{v}\mid\theta,\mathbb{u}\sim\gamma_\scripted{v}(\scripted{v}\mid\theta)\defeq\mathcal{N}\left(0,\Sigma^{-1}\right),
\end{equation*} as with Hamiltonian slice sampling and form the joint density 
\begin{equation*}
	\varphi(\theta,\scripted{v},\mathbb{u}) \defeq \gamma_\scripted{v}(\scripted{v}\mid\theta)\gamma_\mathbb{u}(\mathbb{u}\mid\theta)\widehat{\pi}(\theta,\mathbb{u}),
\end{equation*} which admits $\pi(\theta,\scripted{v})\defeq\pi(\theta)\gamma_\scripted{v}(\scripted{v}\mid\theta)$, and therefore also $\pi(\theta)$, as its marginal densities. 

Second, we introduce the slice sampling auxiliary variable as follows
\begin{equation*}
	\scripted{h}\mid\theta,\scripted{v},\mathbb{u}\sim\mathcal{U}[0,\gamma_\scripted{v}(\scripted{v}\mid\theta)\widehat{\pi}(\theta,\mathbb{u})],
\end{equation*} so that the joint density of $\mat{\theta,&\scripted{v},&\mathbb{u},&\scripted{h}}$ becomes
\begin{equation*}
	\varphi(\theta,\scripted{v},\mathbb{u},\scripted{h}) = \gamma_\mathbb{u}(\mathbb{u}\mid\theta);\quad\text{ subjected to } \scripted{h}\le\gamma_\scripted{v}(\scripted{v}\mid\theta)\widehat{\pi}(\theta,\mathbb{u}).
\end{equation*}
We can sample from this joint density using the following Gibbs sampling scheme
\begin{align}
	\scripted{h}\mid\theta,\scripted{v},\mathbb{u}&\sim\mathcal{U}[0,\gamma_\scripted{v}(\scripted{v}\mid\theta)\widehat{\pi}(\theta,\mathbb{u})],\\
	\scripted{v}\mid\theta,\scripted{h},\mathbb{u} &\sim\mathbb{1}_{\scripted{S}^\scripted{v}}(\scripted{v});\quad\scripted{S}^\scripted{v}\defeq\{\scripted{v} : \scripted{h}\le\gamma_\scripted{v}(\scripted{v}\mid\theta)\widehat{\pi}(\theta,\mathbb{u}) \},\\
	\theta,\scripted{v},\mathbb{u}\mid\scripted{h} &\sim \gamma_\mathbb{u}(\mathbb{u}\mid\theta)\mathbb{1}_\scripted{S}(\theta,\scripted{v},\mathbb{u});\quad\scripted{S}\defeq\{\mat{\theta,&\scripted{v},&\mathbb{u}} : \scripted{h}\le\gamma_\scripted{v}(\scripted{v}\mid\theta)\widehat{\pi}(\theta,\mathbb{u}) \}\label{eq:pseudo_Hamiltonian}.
\end{align}
While the other conditional densities are easy to simulate, we can sample from (\ref{eq:pseudo_Hamiltonian}) by using the trajectory (\ref{eq:ellipse}) to compute $\mat{\xi,&\scripted{w}}$ with the random integration time simulated from some symmetrical density $\scripted{r}\sim\scripted{q}_\scripted{r}(\cdot)$ as in (\ref{eq:r_symmetry}), which results in the following acceptance probability
\begin{equation}\label{eq:PHMC_acc}
	\begin{aligned}\alpha(\xi,\scripted{w},\mathbb{w}\mid\theta,\scripted{v},\mathbb{u}) &= \MN{1,\frac{\gamma_\mathbb{u}(\mathbb{w}\mid\xi)\mathbb{1}_\scripted{S}(\xi,\scripted{w},\mathbb{w})}{\gamma_\mathbb{u}(\mathbb{u}\mid\theta)\mathbb{1}_\scripted{S}(\theta,\scripted{v},\mathbb{u})}\frac{\gamma_\mathbb{u}(\mathbb{u}\mid\theta)}{\gamma_\mathbb{u}(\mathbb{w}\mid\xi)} } = \mathbb{1}_\scripted{S}(\xi,\scripted{w},\mathbb{w}).
	\end{aligned}
\end{equation}

Even though we cannot apply a recursive sampling scheme in the pseudo marginal context, we can still use $\scripted{h}$ to scale the proposal for $\scripted{r}\sim\scripted{q}_\scripted{r}(\scripted{r}\mid\scripted{h})$ so that this acceptance rate is equal to one with higher probability. This is because $\psi(\theta)\approx\pi(\theta)\approx\widehat{\pi}(\theta,\mathbb{u})$ and since the term ${\gamma_\scripted{v}(\scripted{v}\mid\theta)\psi(\theta)}$ is preserved by the Hamiltonian mapping (\ref{eq:ellipse}), hence the term $\gamma_\scripted{v}(\scripted{v}\mid\theta)\widehat{\pi}(\theta,\mathbb{u})$ should also remain approximately level along the elliptical Hamiltonian trajectory. Therefore we should propose larger $\scripted{r}$ when $\scripted{h}$ is small and vice versa, e.g. $\scripted{r}\sim\mathcal{N}(0,\sigma^2/\scripted{h}^\zeta);\;\sigma,\zeta>0$ or a truncated Gaussian with threshold equal to $\scripted{h}/\gamma_\scripted{v}(\scripted{v}\mid\theta)\widehat{\pi}(\theta,\mathbb{u})$.

While previous approaches are carefully designed so that they only require the computation of $\widehat{\pi}(\theta,\mathbb{u})$, we note that any slice sampling scheme can also be employed to draw samples from $\varphi(\theta,\mathbb{u})$ as done by \cite{Murray2016}. Furthermore, discretised Hamiltonian sampling is also applicable if $\varphi(\theta,\mathbb{u})$ is designed to be a smooth density either by not performing resampling in particle filtering, as done by \cite{Lindsten2016}, or by following \cite{Malik2011}.
\section{Discussion}

Since we have provided an unified framework for many Markov chain Monte Carlo algorithms, it can now be natural to think of a general convergence theory for all these sampling approaches based on studying the properties of the proposal density of $\scripted{V}$ and the mapping $\scripted{T}$. While we have restricted our attention to algorithms that corresponds to a self--reverse mapping, the original framework given by \cite{Green1995} also allows for mappings that are not self--inverse. It is an open question whether this framework therefore can also be generalised to study nonreversible Markov chain samplers as studied by \cite{Diaconis2000,Sohl-Dickstein2014,Bierkens2015}.

On the practical aspects of Markov chain sampling, we have highlighted the fact that both slice sampling and Metropolis sampling are special cases of a single theorem. However, slice sampling additionally has a recursive proposal mechanism that is capable of automatically tuning the scale of the generated proposals to guarantee a new sample in each iteration. We have also presented new ways to perform slice sampling, e.g. on a random direction or an hyper--ellipse, which are based on embedding different samplers inside various Gibbs sampling schemes. 

Among the samplers presented in this paper, only directional slice sampling, Hamiltonian slice sampling and discretised Hamiltonian sampling can be used as general purpose ``black--box" samplers. Directional slice sampling is computationally cheap since it does not require matrix computation as other samplers. On the other hand, discretised Hamiltonian sampling can be computationally expensive but is popular and well developed by \cite{Carpenter2016}. Hamiltonian slice sampling can be a mid--way approach that does not require gradient computation while can still make use of the knowledge of the covariance matrix $\Sigma$, if available. This latter approach is also applicable in applications with pseudo marginal density. Additional simulations of these ``black--box" samplers on practical applications is available in the supplementary material.

Finally, when the target density has a high number of dimensions, the usefulness of the covariance matrix $\Sigma$ is limited by the computational cost of associated matrix operations. More importantly, we also currently lack advanced algorithms to estimate this matrix in a recursive manner, in the sense that the estimation is incrementally repeated when new samples arrive. Meanwhile, an optimisation--based approach to covariance estimation can induce significant computational cost in sampling algorithms that require an estimation of $\Sigma$, e.g. see \citep[figure 4]{Nishihara2014a} for a simulation. Therefore, more research in covariance estimation in the context of Markov chain Monte Carlo sampling is necessary to enhance the efficiency of sampling methods such as Hamiltonian slice sampling or Metropolis adjusted Langevin algorithms. One notable contribution in this direction is the application of quasi--Newton approximations from the optimisation literature to Markov chain Monte Carlo by \cite{Zhang2011,Dahlin2015a}.


\section*{Supplementary material}
\label{SM}

Supplementary material is not yet available 

\appendix
\section*{Appendix}
\subsection{Metropolis--Hastings without Self--Reverse Mapping}\label{appx:Green}
The invariance condition (\ref{eq:invariant}) can be guaranteed if the following reversibility condition holds
\begin{equation}\label{eq:reversibility}
	\scripted{K}(\dif\xi\mid\theta)\wp(\dif\theta) = \scripted{K}(\dif\theta\mid\xi)\wp(\dif\xi);\quad\wp(\dif\theta)\defeq\pi(\theta)\dif\theta ,
\end{equation} since this will results in 
\begin{equation*}
	\int_\theta\scripted{K}(\dif\xi\mid\theta)\wp(\dif\theta) =\int_\theta\scripted{K}(\dif\theta\mid\xi)\wp(\dif\xi) = \wp(\dif\xi).
\end{equation*}
Since condition (\ref{eq:reversibility}) is trivial when $\xi = \theta$, we only consider the case when $\xi\neq\theta$, which leads to an equivalent expression of reversibility as follows
\begin{equation*}
	\alpha(\xi\mid\theta)\scripted{Q}(\dif\xi\mid\theta)\wp(\dif\theta) = \alpha(\theta\mid\xi)\scripted{Q}(\dif\theta\mid\xi)\wp(\dif\xi),
\end{equation*} where $\scripted{Q}(\dif\xi\mid\theta)$ denotes the proposal distribution of the Metropolis--Hastings algorithm. 

Following \cite{Peskun1973,Green1995}, we can see that the reversibility condition is optimally satisfied with the following acceptance probability
\begin{equation}\label{eq:general_acc}
	\alpha(\xi\mid\theta)\defeq\MN{1,\frac{\scripted{Q}(\dif\theta\mid\xi)\wp(\dif\xi)}{\scripted{Q}(\dif\xi\mid\theta)\wp(\dif\theta)}}.
\end{equation}

According to \cite{Green1995}, we can denote $\lambda(\cdot)$ as the $(n_\theta+n_\scripted{V})$--dimensional Lebesgue measure and define a symmetric measure on the product space $\theta\in\scripted{A}\subset\mathbb{R}^{n_\theta}\times\xi\in\scripted{B}\subset\mathbb{R}^{n_\theta}$ as follows
\begin{equation*}
	\kappa(\scripted{A}\times\scripted{B}) = \kappa(\scripted{B}\times\scripted{A}) \defeq \lambda\{\mat{\theta,&\scripted{V}} : \theta\in\scripted{A},\xi\in\scripted{B}\};\quad (\xi,\scripted{W})\defeq\scripted{T}(\theta,\scripted{V}),
\end{equation*} which induces the following related densities 
\begin{equation}\label{eq:densities}
	\begin{aligned}
	\scripted{Q}(\dif\theta\mid\xi)\wp(\dif\xi) &= \pi(\xi,\scripted{W})\vmat{J_\scripted{T}(\theta,\scripted{V})}\lambda(\dif\theta\dif\scripted{V}) \\
	\scripted{Q}(\dif\xi\mid\theta)\wp(\dif\theta) &= \pi(\theta,\scripted{V})\lambda(\dif\theta\dif\scripted{V}).
	\end{aligned}
\end{equation} Finally substituting (\ref{eq:densities}) into (\ref{eq:general_acc}) will result in the acceptance probability given in theorem (\ref{thm:MH}) without requiring $\scripted{T}$ to be a self--reverse mapping.

\subsection{Elliptical Hamiltonian Dynamics}\label{appx:Hamiltonian}

Without loss of generality, we can let $\mu=0$ and define the Hamiltonian term as follows
\begin{equation}\label{eq:simple_Hamiltonian}
	\scripted{H}(\theta,\scripted{v}) = \frac{1}{2}\left(\theta^\mathsf{T}\Sigma^{-1}\theta + \scripted{v}^\mathsf{T}\Sigma\scripted{v}\right),
\end{equation}
the exact Hamiltonian dynamics can be derived using the following transforms of variables. Specifically, let the covariance matrix $\Sigma$ have the following Cholesky decomposition $$\Sigma = \scripted{M}^\mathsf{T}\scripted{M} \Rightarrow \Sigma^{-1} = \scripted{M}^{-1}\left(\scripted{M}^\mathsf{T}\right)^{-1},$$ and using the following linear transforms $\theta\defeq\scripted{M}^\mathsf{T}\scripted{x},\;\scripted{v}\defeq\scripted{M}^{-1}\scripted{y}$, we can rewrite the Hamiltonian as 
\begin{equation}\label{eq:simple_Hamiltonian2}
	\scripted{H}(\scripted{x},\scripted{y}) = \frac{1}{2}\left(\scripted{x}^\mathsf{T}\scripted{x} + \scripted{y}^\mathsf{T}\scripted{y}\right),
\end{equation} which leads to the following closed--form solution \citep{Pakman2014}
\begin{equation*}
	\begin{aligned}
		\scripted{x}(t) &= \scripted{x}(0)\cos(t) + \scripted{y}(0)\sin(t),\\
		\scripted{y}(t) &= \scripted{y}(0)\cos(t) - \scripted{x}(0)\sin(t).
	\end{aligned}
\end{equation*} 
According to \cite[section 4.1]{Neal2012}, this trajectory is equivalent with the following tracjectory in the original domain
\begin{equation*}
	\begin{aligned}
		\theta(t) &= \theta(0)\cos(t) + \Sigma\scripted{v}(0)\sin(t),\\
		\scripted{v}(t) &= \scripted{v}(0)\cos(t) - \Sigma^{-1}\theta(0)\sin(t).
	\end{aligned}
\end{equation*} When $\mu\neq 0$, we simply need to shift the coordinates of $\theta$ by replacing $\theta$ with $\theta-\mu$ to retrieve the corresponding trajectory.

\subsection{Uniform Sampling From the Volume of an Ellipsoids}\label{appx:uniform_ellipsoid}

Given that the momentum vector is distributed as $ \scripted{v}\mid\theta\sim\mathcal{N}(0,\Sigma^{-1}) $, the indicator condition of the slice, i.e. $\log\scripted{h}\le-\scripted{H}(\theta,\scripted{v})$, is also equivalent to 
\begin{equation}\label{eq:hyperellipse}
	\frac{\scripted{v}^\mathsf{T}\Sigma\scripted{v}}{\rho}\le1;\quad\rho\defeq -\left( 2\log \frac{h}{\pi(\theta)} + n_\theta\log(2\mathbb{\bbpi}) + \log|\Sigma^{-1}|\right),
\end{equation} so the density given by (\ref{eq:Hamiltonian_slice1}) is uniform on the ellipsoid defined by (\ref{eq:hyperellipse}). We can draw samples from this density by first drawing some vector $\scripted{y}$ uniformly distributed in the unit ball as follows
\begin{equation*}
	\scripted{y}=\frac{\scripted{z}}{\norm{\scripted{z}}}u^{1/n_\theta};\quad u\sim\mathcal{U}[0,1];\quad\scripted{z}\sim\mathcal{N}(0,\mathbb{I}_{n_\theta}).
\end{equation*}
The rationale for this is because $\p{}{\scripted{y}\in\scripted{Ball}(0,r)} = \p{}{u^{1/n_\theta} \le r} = r^{n_\theta} $ and $\mathsf{volume}[\scripted{Ball}(0,r)]\propto r^{n_\theta}$. Hence, the density of $\scripted{y}$ will be constant with respect to the radius $r$, while the random Gaussian vector ${\scripted{z}}/{\norm{\scripted{z}}}$ is isotropic and therefore uniformly distributed on the unit sphere. Finally, we map $\scripted{y}$ to the desired ellipsoid by $\scripted{v} = \scripted{M}^{-1}\scripted{y}/\sqrt{\rho}$ where $\scripted{M}$ is previously defined as the Cholesky decomposition of $\Sigma$. The vector $\scripted{v}$ will also be uniformly distributed since $\p{\scripted{v}}{\scripted{v}} = \p{\scripted{y}}{\scripted{Mv}}\vmat{\scripted{M}} $ is also a uniform density in $\scripted{v}$.
\subsection{Pseudo Marginal and Approximate Bayesian Computation in State Space Models}\label{appx:ABC}
Time series are statistical data that appears in science, engineering, health and economics as time sequences of random variables $y_t\in\mathbb{R}^{n_y}$  that are jointly denoted as $ Y_T \defeq \{y_t\}_{t=1}^T$. The natural ordering in time often implies a statistical dependency of each data point $y_t$ in the series upon its prior history $Y_{t-1}$. This dependency is central to modelling time series data and also predicting its behaviours in the near future time steps. Ordinarily, data points that are closer together in time will be more correlated than data that comes from further apart. One way to model this temporal dependency is to base the evolution of the time series upon a Markov chain of the state variables $x_t\in\mathbb{R}^{n_x}$ as follows
\begin{equation}\label{eq:SSM}
\begin{aligned}
	  x_{t+1}\mid x_t &\sim f_\theta(x_{t+1}\mid x_t);\quad x_1\sim f_\theta(x_1), \\
	  y_t\mid x_t &\sim g_\theta(y_t\mid x_t);\quad \theta\sim\psi(\theta),
\end{aligned}
\end{equation} where $\theta\in\mathbb{R}^{n_\theta}$ denotes some static parameters that characterise the densities $f_\theta, g_\theta$. 

In state space model, we can recognise that the target $\pi(\theta)\defeq\p{}{\theta\mid Y_T}$ is a marginal density of a larger density $ \pi(\theta,X_T)\defeq\p{}{\theta,X_T\mid Y_T} $, but we may not be able to sample directly from this density due to large number of dimensions even when the constituent densities $f_\theta, g_\theta$ are all tractable. Furthermore, we are sometimes only interested in sampling $\theta\sim\pi(\theta)$, which has a significantly smaller dimensionality and can also be computed up to a normalising constant as follows\begin{equation}
	\pi(\theta)\defeq\p{}{\theta\mid Y_T}\propto\psi(\theta)\p{\theta}{Y_T},
	\end{equation} where $\p{\theta}{Y_T}$ can be approximated by $\phat{\theta,\mathbb{u}}{Y_T}$ using particle filtering as described by \cite{Doucet2000}.

In this scenario, the pseudo marginal Metropolis--Hastings algorithm described in section (\ref{sec:PHMC}) is directly applicable e.g. by letting $\mathbb{u}\sim\gamma_\mathbb{u}(\mathbb{u}\mid\theta)$ correspond to all the random numbers generated by the particle filter and $\widehat{\pi}(\theta,\mathbb{u})\defeq\psi(\theta)\phat{\theta,\mathbb{u}}{Y_T}$, where the construction of $\widehat{\pi}(\theta,\mathbb{u})$ can be arbitrarily intricate as long as it gives the best possible estimator for $\pi(\theta)$ because we do not need to compute $\gamma_\mathbb{u}(\mathbb{u}\mid\theta)$.

One popular example of Bayesian models with truly intractable posterior or likelihood function is a state space model with either intractable state transition density $f_\theta(x_{t+1}\mid x_t)$ or measurement density $g_\theta(y_t\mid x_t)$. However, if simulating random draws from $f_\theta$ and $g_\theta$ is still feasible, then we can construct an approximate model of (\ref{eq:SSM}) that avoids having to calculate these densities in order to perform inference on $\theta$. This is the motivation of approximate Bayesian computation (ABC) in state space model. According to \cite{Dahlin2015a}, this approximate model can be constructed by first creating an auxiliary data set $Z_T\defeq\{z_t\}_{t=1}^T$ using some tractable density $\nu_t\sim\phi(\cdot) $, with normalised variance $\V{\phi}{\nu} = 1$, as follows
\begin{equation*}
	z_t = \kappa(y_t) + \epsilon\nu_t;\quad\epsilon>0,
\end{equation*} where $\kappa(\cdot)$ is a deterministic function.
Then we have the following approximate state space model
\begin{align}
	x_{t+1},\widetilde{y}_{t+1}\mid x_t, \widetilde{y}_t &\sim f_\theta(x_{t+1}\mid x_t) g_\theta(\widetilde{y}_{t+1}\mid x_{t+1}),\\
	z_t\mid x_t, \widetilde{y}_t &\sim\phi\left(\frac{z_t-\kappa(\widetilde{y}_t)}{\epsilon}\right).\label{eq:approx_measurement}
\end{align}
Finally, we can use the bootstrap particle filter algorithm given by \cite{Gordon1993,Doucet2000} to draw samples from $\p{\theta}{X_T,\widetilde{Y}_T\mid Z_T},\;\widetilde{Y}_T\defeq\{\widetilde{y}_t\}_{t=1}^T,$ and estimate the likelihood function for the approximate model, which is $\p{\theta}{Z_T}$. 

The argument for this approach essentially is based on the intuition that $Z_T\to Y_T$ as $\epsilon\to 0$, or i.e. $Z_T$ becomes sufficiently informative about $\theta$ in comparison with the original data $Y_T$. However, the artificial measurement density (\ref{eq:approx_measurement}) will become arbitrarily informative as $\epsilon\to 0$, i.e. $\p{}{z_t\mid x_t, \widetilde{y}_t}$ becomes an arbitrarily sharp function, which makes conventional filtering methods, such as the bootstrap particle filter, perform poorly. This phenomenon can be seen in \citep[figure 1]{Dahlin2015a} where the error in approximating $\log\p{\theta}{Y_T}$, using a bootstrap estimate of $\log\p{\theta}{Z_T}$, grows sharply as $\epsilon$ approaches zero. 

One heuristic argument for choosing a good value for $\epsilon$ is to see that, according to the approximate model, we have 
\begin{equation*}
	\V{\theta}{\kappa(\widetilde{y}_t) - \kappa(y_t)\mid z_t} = \V{\theta}{\kappa(\widetilde{y}_t) - z_t + z_t - \kappa(y_t)\mid z_t} = 2\epsilon^2;\quad\forall z_t.
\end{equation*} 
Therefore if $\phi(\cdot)$ is Gaussian, then we do not need to generate $Z_T$ and simply rewrite (\ref{eq:approx_measurement}) as 
\begin{equation}
	y_t\mid x_t, \widetilde{y}_t \sim\phi\left(\frac{\kappa(y_t)-\kappa(\widetilde{y}_t)}{\epsilon\sqrt{2}}\right).\label{eq:approx_measurement2}
\end{equation}
This formulation is inline with the original approach given by \cite{Jasra2012} except that $\phi(\cdot)$ is chosen as a uniform density. Now if $\theta$ is close to the true parameter $\theta_\star$ then 
\begin{equation*}
	\V{\theta}{\kappa(\widetilde{y}_t) - \kappa(y_t)}\approx 2\V{\theta_\star}{\kappa(y_t)}.
\end{equation*} Hence it is reasonable to set $\epsilon\approx\sqrt{\V{\theta_\star}{\kappa(y_t)}}$ if an approximate of this constant is available. This argument seems to be supported by the simulation given in \citep[section 5.1]{Dahlin2015a}.

\subsection{A Toy Example}\label{appx:toy_ex}
We now illustrate the performance of Metropolis sampling with the following example model, which is rich enough to also serve as a template for illustrating the performance of other sampling approaches, while also is simple enough to not interfere with the clarity of the present discussions. By using a simple model, we can also ensure that we are not incidentally favouring any one of the algorithms and we can simply expect all of them to be working at their best possible performance so we can intuitively judge the strength and weakness of each algorithm.
\begin{example}\label{ex:Metropolis}
We consider a toy example where the experiment data $Y_T\defeq\{\scripted{y}_t : t=1,2,\dots T\}$, are generated by the following model
	\begin{equation}\label{eq:ex_model}
		\scripted{y}_t = \scripted{x}_t + \epsilon_t;\quad\scripted{x}_t\sim\mathcal{N}(0,1);\quad\epsilon_t\sim\mathcal{N}(0,e^\upsilon);\quad\scripted{y}_t,\scripted{x}_t,\epsilon_t,\upsilon\in\mathbb{R} ,
	\end{equation} which is also equivalent to simply $\scripted{y}_t\sim\mathcal{N}(0,1+e^\upsilon)$. Hence, we can derive the posterior density of $\upsilon\mid Y_T$ as
	\begin{equation*}
		\pi(\upsilon)\defeq\p{}{\upsilon\mid Y_T} \propto \frac{\p{}{\upsilon}\exp\left(\frac{-1}{2\SqrBrk{1+e^\upsilon}}\sum_{t=1}^T\scripted{y}_t^2\right)}{\SqrBrk{1+e^\upsilon} ^\frac{T}{2}}.
	\end{equation*}
For simplicity, we let the prior density $\p{}{\upsilon}$ be a standard normal density. We set the true variance to $e^\upsilon =1$ and simulate $T=100$ data points from model (\ref{eq:ex_model}). The overall performance of Metropolis sampling in this model is presented in figure (\ref{fig:MH_sim}), where the integrated autocorrelation factor is estimated to be $\tau_{\widehat{\upsilon}}\approx 5.6$ using Sokal's adaptive truncated periodogram estimator \citep{Sokal1997}. When using the alternative implementation of Metropolis sampling given by (\ref{eq:refresh_V}-\ref{eq:ext_slice}), we found a reduction in autocorrelation factor to $\tau_{\widehat{\upsilon}}\approx 4.6$. A close-up look at the sample trace reveals numerous repeated samples, which is distinctive to Metropolis sampling with optimal acceptance rate $\widebar{\alpha}\approx 0.44$ when $n_\upsilon = 1$. We also note that this simple one--parameter modelling approach yields a very simplistic posterior predictive density for $\scripted{y}_t$, which is effectively a flat band of $90\%$--credible interval across all $t=1,2,\dots T$.
	\begin{figure}
	\center\includegraphics[width=0.64\linewidth]{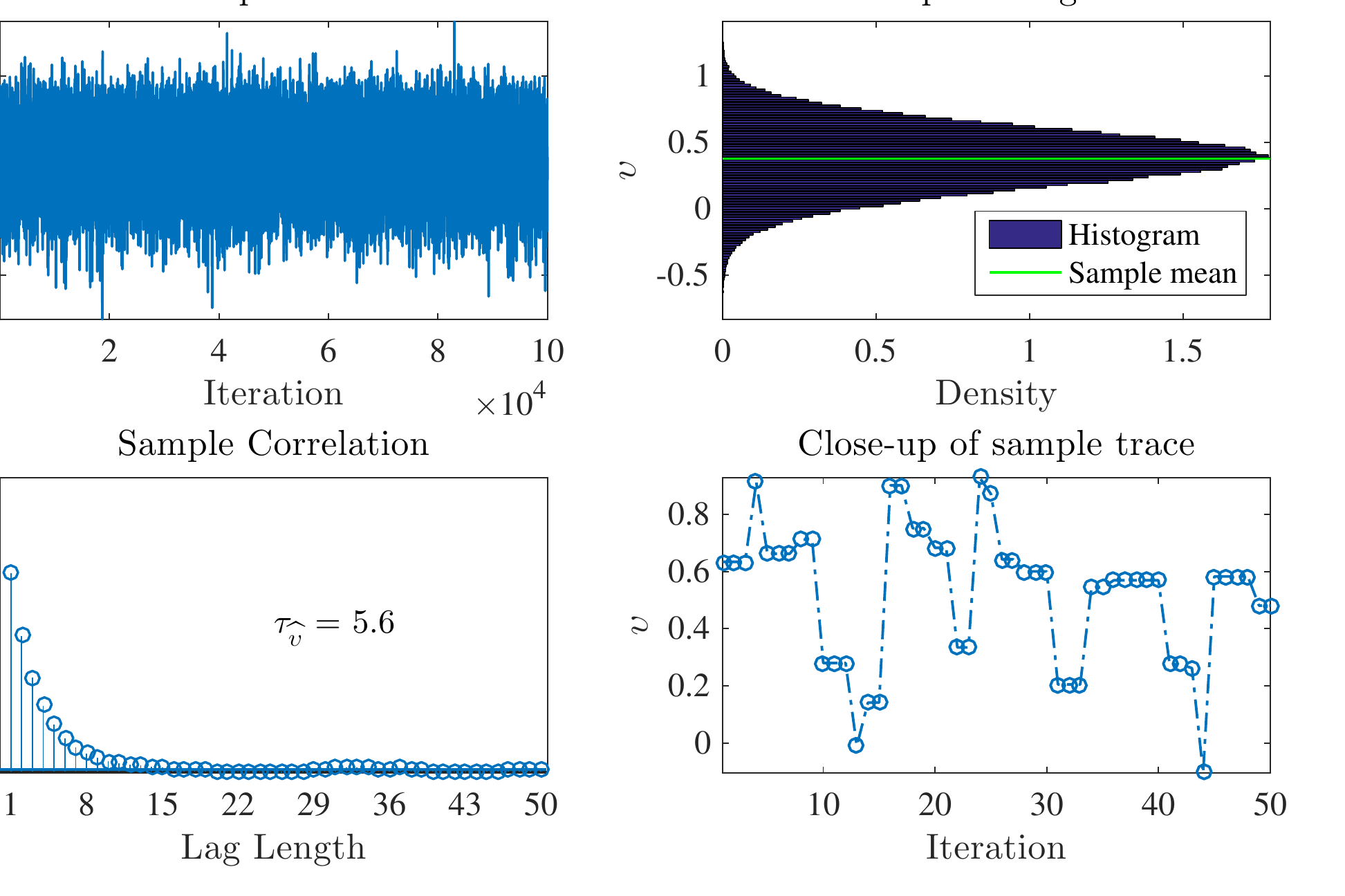}
	\includegraphics[width=0.34\linewidth]{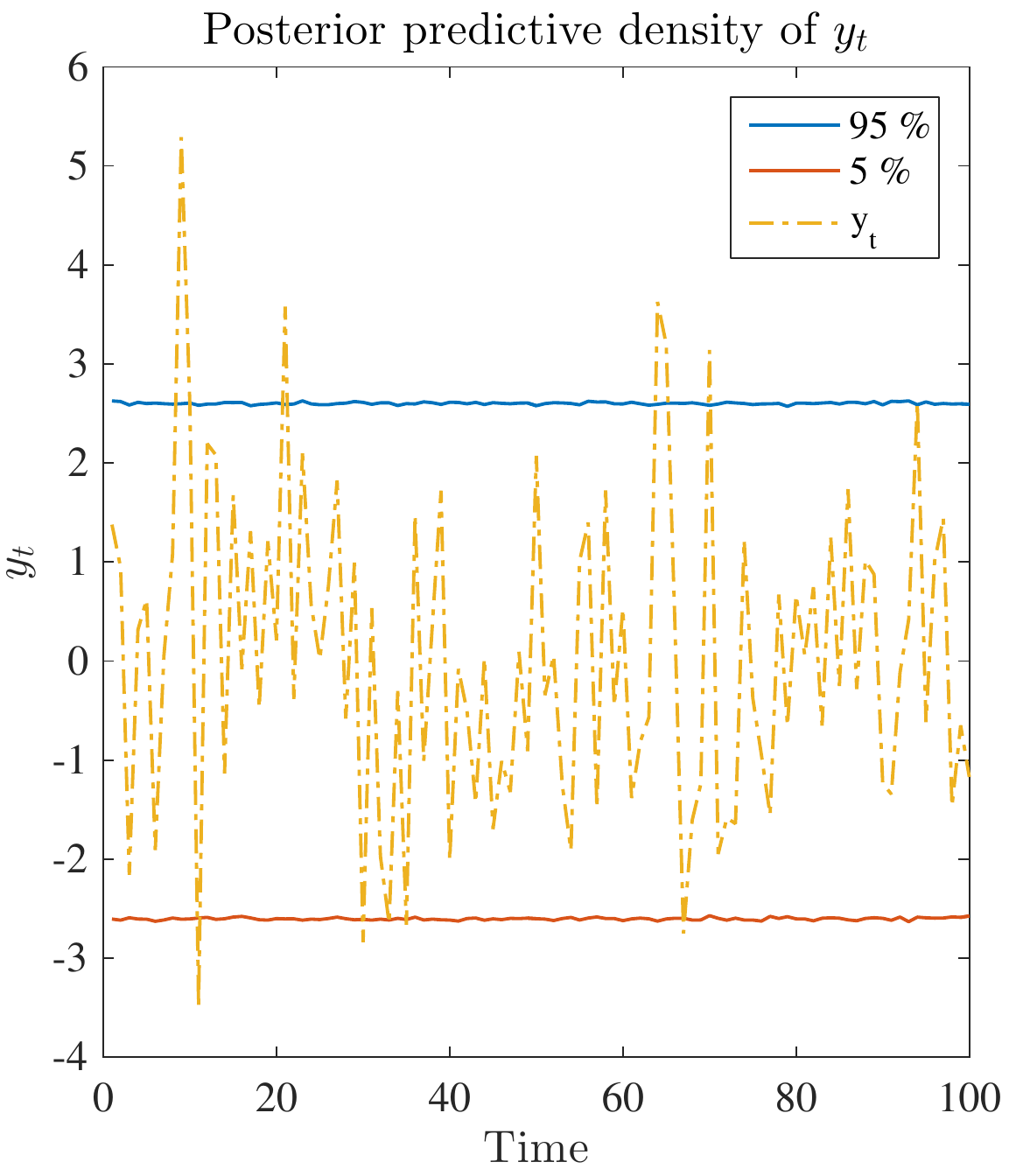}
  	\caption{Sampling $\upsilon_k\sim\p{}{\upsilon\mid Y_T}$ using Adaptive Metropolis from \citep{Andrieu2008}}
  	\label{fig:MH_sim}
	\end{figure}
\end{example}
\begin{example}\label{ex:Gibbs}
	The model (\ref{eq:ex_model}) can also be interpreted as a multi--parameters model with 
	\begin{equation*}
		\theta\defeq\mat{\upsilon & \scripted{x}_1 &\dots &\scripted{x}_T}\in\mathbb{R}^{T+1},
	\end{equation*} and the following  posterior density  
\begin{equation}\label{eq:ex_posterior}
	\pi(\theta)\defeq\p{}{\theta\mid Y_T} \propto\p{}{Y_T\mid\theta}\p{}{\theta}\propto \frac{\p{}{\upsilon}}{\exp(\upsilon)^\frac{T}{2}} \prod_{t=1}^T\exp\left(\frac{-\scripted{x}_t^2}{2}-\frac{(\scripted{y}_t-\scripted{x}_t)^2}{2e^\upsilon}\right)
\end{equation} For $T=100$, Metropolis sampling is no longer practical since the adaptive approximation of the covariance matrix $\Sigma_k$ becomes very erratic at high dimension. Instead, we need to perform the following Gibbs sampling scheme
\begin{align}
	\upsilon\sim\p{}{\upsilon\mid\theta\setminus\upsilon,Y_T} &\propto\frac{\p{}{\upsilon}}{\exp(\upsilon)^\frac{T}{2}} \prod_{t=1}^T\exp\left(-\frac{(\scripted{y}_t-\scripted{x}_t)^2}{2e^\upsilon}\right),\label{eq:a_0} \\
	\scripted{x}_t\sim\p{}{\scripted{x}_t\mid\theta\setminus\scripted{x}_t,Y_T} &\propto\exp\left(\frac{-\scripted{x}_t^2}{2}-\frac{(\scripted{y}_t-\scripted{x}_t)^2}{2e^\upsilon}\right) \text{ for } t= 1,2,\dots T,\label{eq:a_t}
\end{align} where (\ref{eq:a_0}) is a univariate density and can be solved by any Markov chain Monte Carlo algorithm, we used slice sampling \citep{Neal2003} in this instance. We can also sample exactly from the other $T$ conditional densities of the form (\ref{eq:a_t}) since they are products of 2 Gaussians which are themselves indeed Gaussian with variance $\sqrt{\frac{e^\upsilon}{e^\upsilon+1}}$ and mean $\scripted{y}_te^\upsilon$ respectively. We see in figure (\ref{fig:Gibbs_sim}) that this sampling approach produces virtually i.i.d. samples from $\p{}{\scripted{x}_t\mid\theta\setminus\scripted{x}_t,Y_T}$, e.g. $\tau_{\widehat{\scripted{x}_{60}}}\approx 1$, while the samples from $\p{}{\upsilon\mid\theta\setminus\upsilon,Y_T}$ are essentially as uncorrelated as the samples in example (\ref{ex:Metropolis}). Since the model now includes inference on the values of $\scripted{x }_t$, the posterior predictive density for $y_t$ is now much more realistic than the single parameter modelling approach in example (\ref{ex:Metropolis}).
\begin{figure}[hbt]
	\center
	\includegraphics[width=0.49\linewidth]{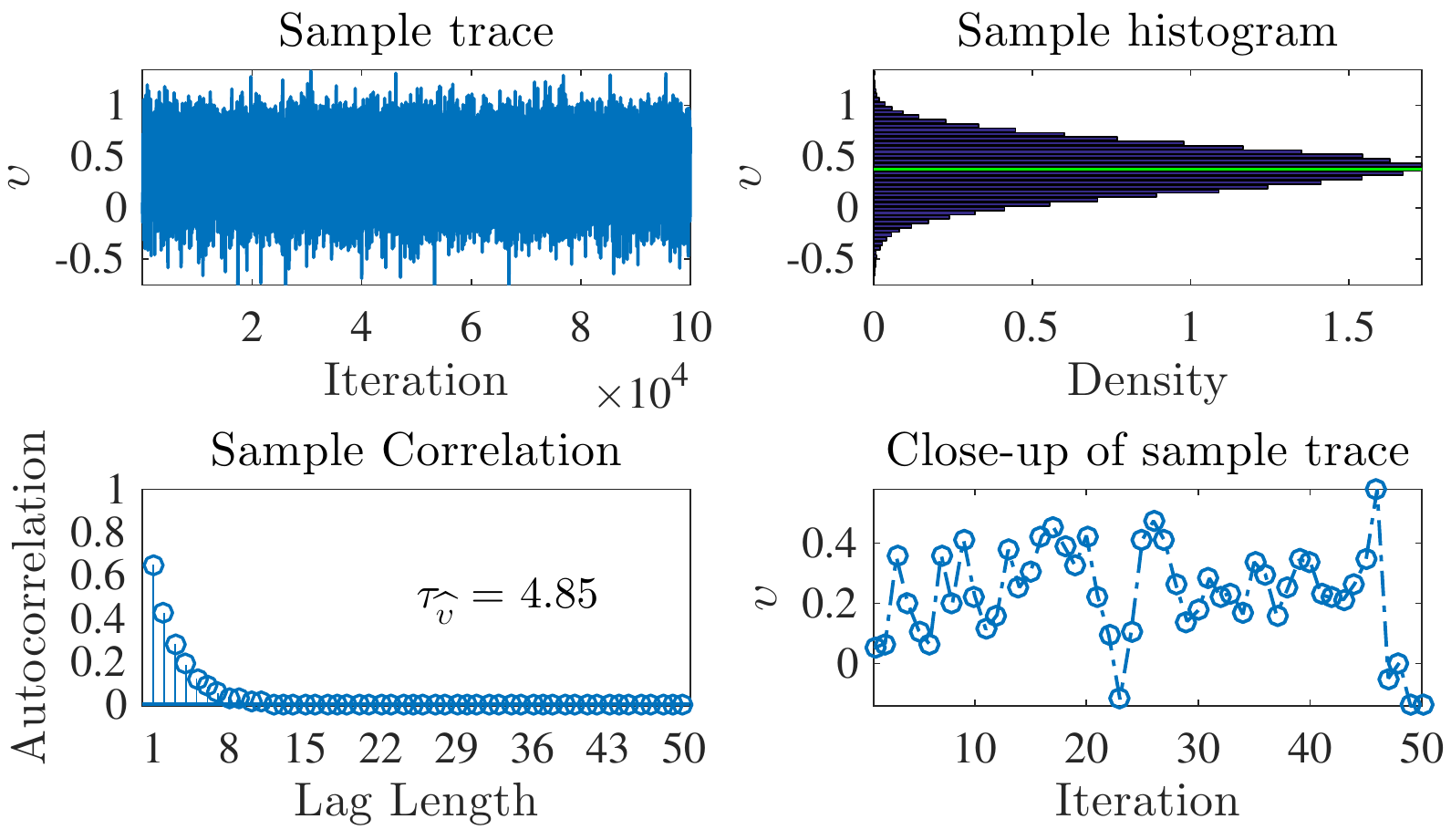}
	\includegraphics[width=0.49\linewidth]{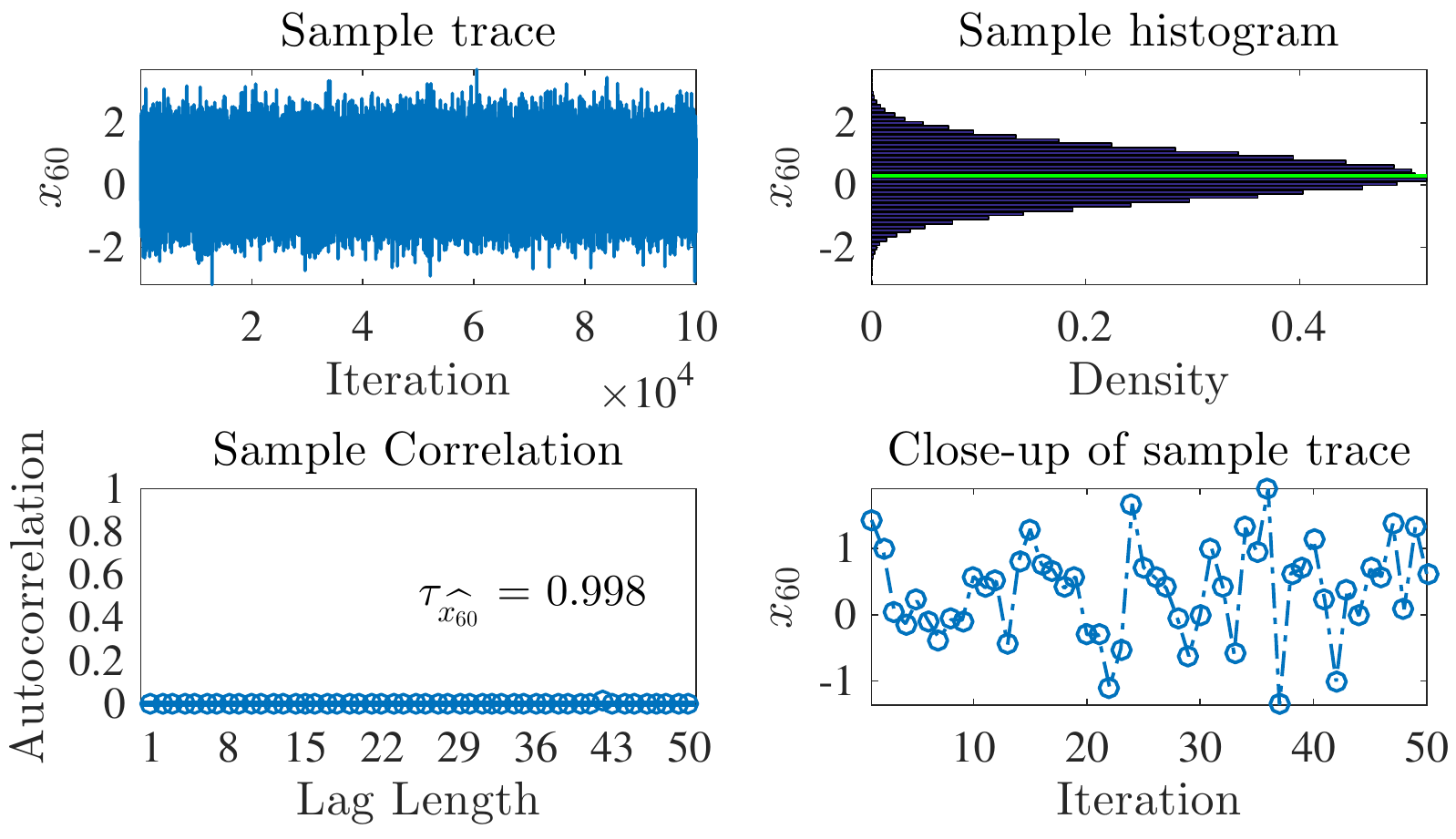}
	\includegraphics[width=\linewidth]{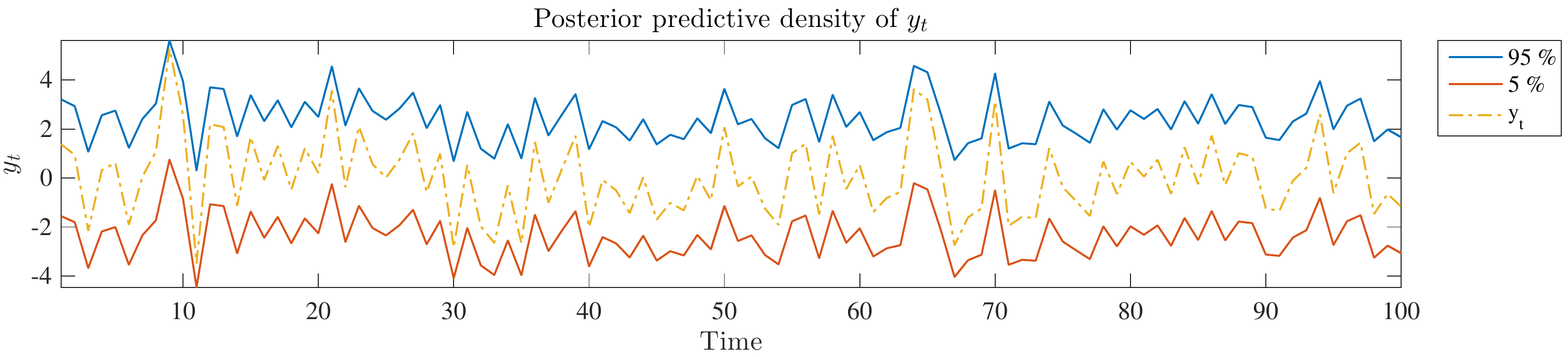}
  	\caption{Gibbs sampling for $\theta\defeq\SqrBrk{\upsilon\;  \scripted{x}_1\; \dots\; \scripted{x}_T}\sim\p{}{\theta\mid Y_T}$}
  	\label{fig:Gibbs_sim}
\end{figure}
\end{example}
\begin{example}\label{ex:aux_Gibbs}
Despite the fact that the Gibbs sampler is applicable to model (\ref{eq:ex_model}) without adding any extra auxiliary variables, we can employ the auxiliary variables method, just for a comparison, by introducing one auxiliary variable for each factor in the density (\ref{eq:ex_posterior}) as ${\mathbb{h}\defeq\mat{h_0 & h_1 & h_2 &\dots h_T} }$ with
\begin{equation*}
		h_t\mid\theta,Y_T \sim\mathcal{U}\left[0,\scripted{l}_t(\theta)\right],\quad\theta\defeq\mat{\upsilon & \scripted{x}_1 &\dots &\scripted{x}_T},
\end{equation*} where $\scripted{l}_0(\theta)\defeq \p{}{\upsilon}$ and for $t = 1,2,\dots T$
\begin{equation*}
	\scripted{l}_t(\theta)\defeq \frac{\exp\left(- \frac{\scripted{x}_t^2}{2}-\frac{(y_t-\scripted{x}_t)^2}{2e^\upsilon}\right)}{\sqrt{e^\upsilon}} = \frac{s_t}{\sigma}\exp\left(-\frac{(\scripted{x}_t - \mu_t)^2}{2\sigma^2} \right),
\end{equation*} where the derivation for the rightmost expression above can be found in \citep{Bromiley2014} with
\begin{equation}\label{eq:prod_2_Gaus}
\begin{aligned}
	\mu_t &\defeq y_t\sigma^2;\quad
	\sigma^2 &\defeq\frac{e^\upsilon}{e^\upsilon+1};\quad
	s_t &\defeq\frac{\exp\left(-\frac{y_t^2}{2(e^\upsilon+1)}\right)}{\sqrt{e^\upsilon+1}}.
\end{aligned}
\end{equation} Therefore, we can solve the conditions $h_t < \scripted{l}_t(\theta) $ for $\scripted{x}_t$ to results in
\begin{equation}\label{eq:Gibbs_x}
	\scripted{x}_t = \mu_t + u_t\sigma\sqrt{-2\log\left(\frac{\sigma h_t}{s_t}\right)};\quad -1< u_t < 1.
\end{equation}
Now the Gibbs sampling scheme to sample from $\p{}{\theta,\mathbb{h}\mid Y_T}$ can be completed with 
\begin{align}
	\scripted{x}_t\mid\theta\setminus\scripted{x}_t ,\mathbb{h},Y_T &\sim\mathcal{U}\left[\mu_t - \sigma\sqrt{-2\log\left(\frac{\sigma h_t}{s_t}\right)},\mu_t + \sigma\sqrt{-2\log\left(\frac{\sigma h_t}{s_t}\right)}\right],\label{eq:aux_Gibbs_x} \\
	\upsilon\mid\theta\setminus\upsilon,\mathbb{h},Y_T &\sim \prod_{t=0}^T\mathbb{1}\{h_t < \scripted{l}_t(\theta)\}\label{eq:aux_Gibbs_ups},
\end{align} where the boundary condition of (\ref{eq:aux_Gibbs_x}) is solved exactly by (\ref{eq:Gibbs_x}) while (\ref{eq:aux_Gibbs_ups}) has $T+1=101$ boundary conditions and is solved by slice sampling \citep{Neal2003}. The performance of this approach is given in figure (\ref{fig:aux_Gibbs}), where we can see the Markov chain exploring the $T=100$ dimensions of $\scripted{x}_t$ quite freely while significantly stronger sample correlation appears in the $\upsilon$--component, due to the $101$ concurrent constrains placed on the movement of $\upsilon$ in each iteration.
\begin{figure}[hbt]
\center
\includegraphics[width=0.49\linewidth]{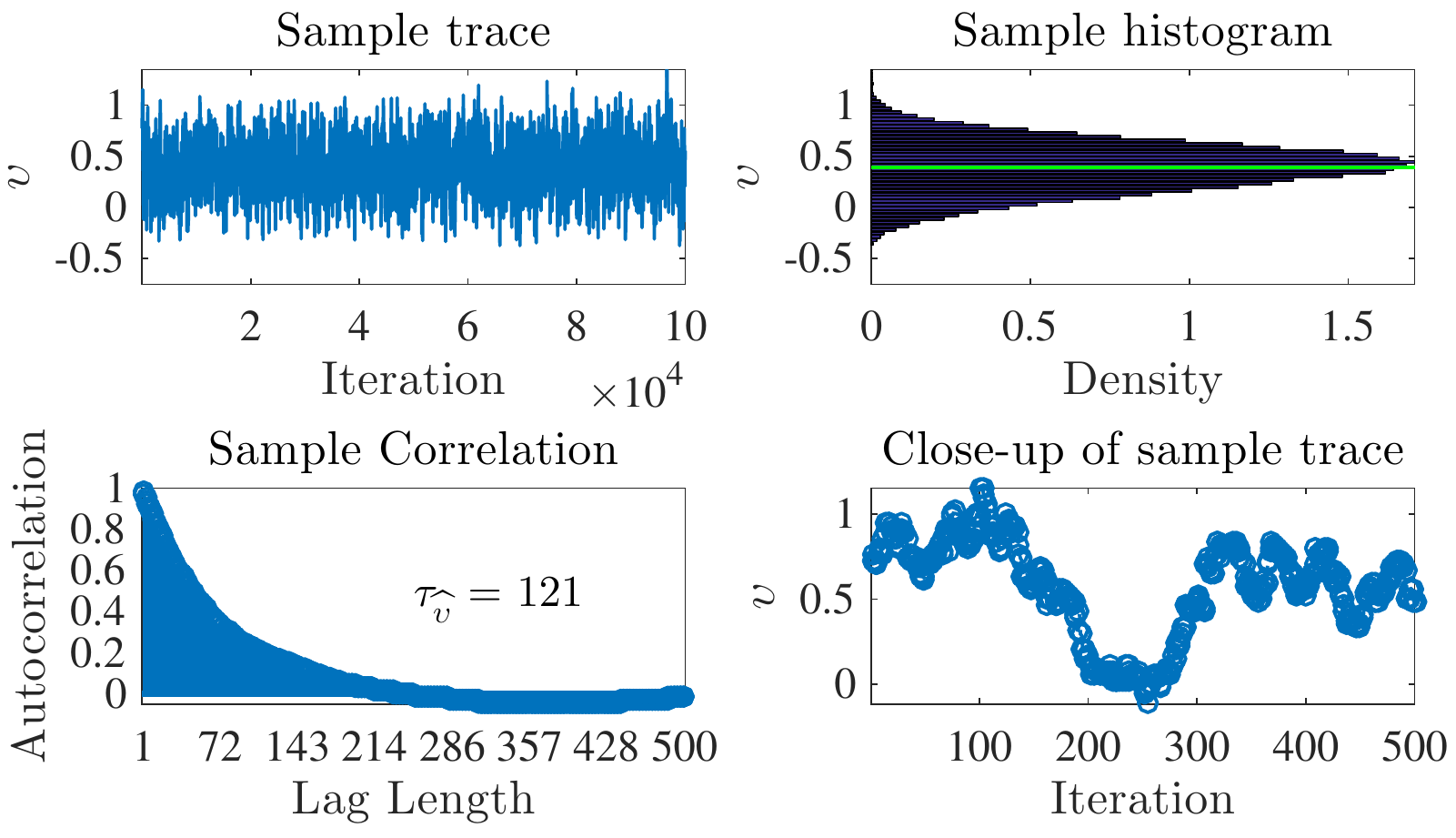}
\includegraphics[width=0.49\linewidth]{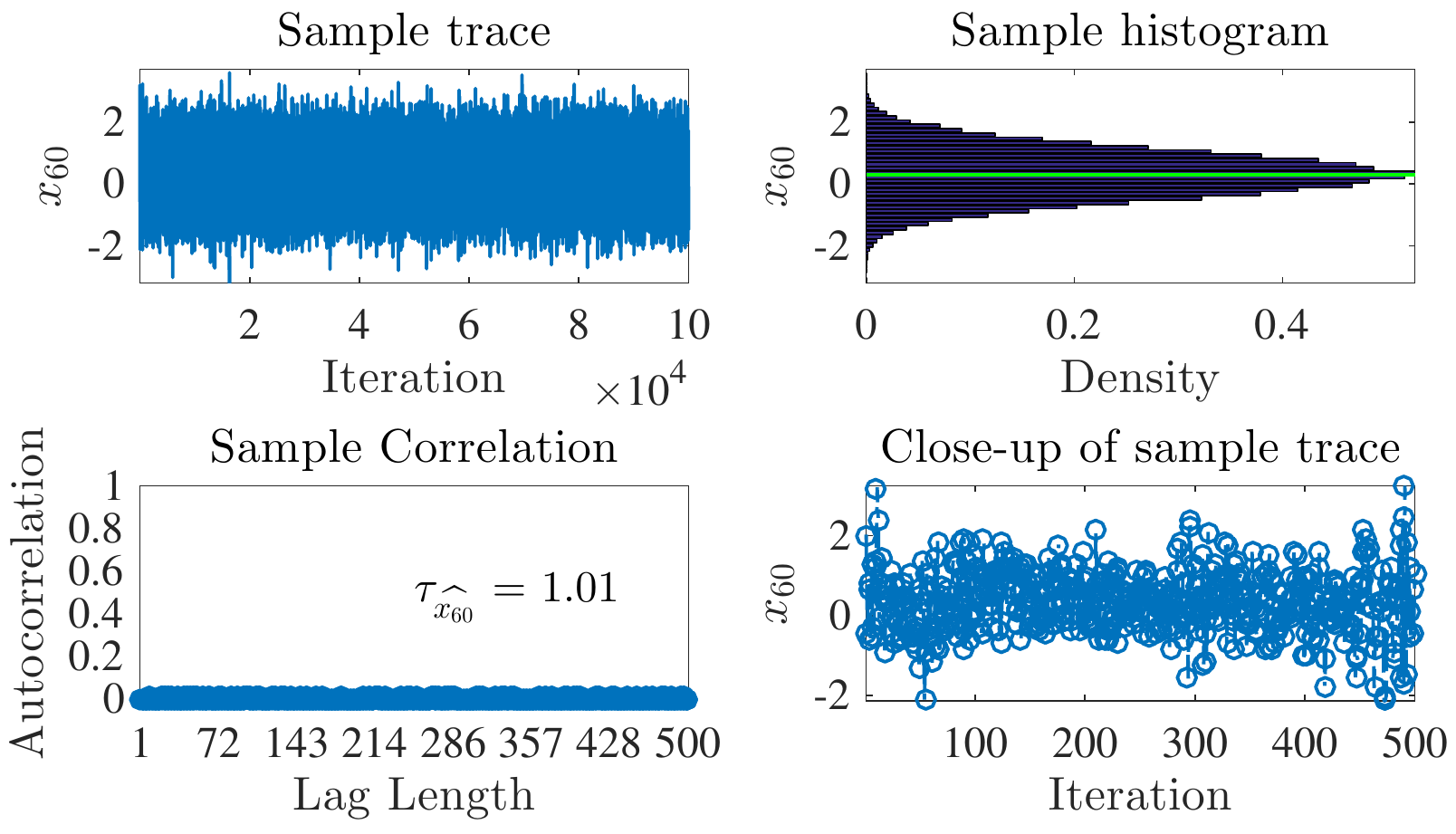}
\caption{Sampling for $\theta\defeq\SqrBrk{\upsilon\;  \scripted{x}_1\; \dots\; \scripted{x}_T}\sim\p{}{\theta\mid Y_T}$ using auxiliary variables method}
\label{fig:aux_Gibbs}
\end{figure}
\end{example}
\begin{example}\label{ex:slice}
	We have seen in example (\ref{ex:aux_Gibbs}) that introducing too many auxiliary variables could eventually induce excessive restriction on the movement of $\upsilon$. To avoid this phenomenon, we can introduce only one auxiliary variable as
\begin{equation*}
	\scripted{h}\mid\theta,Y_T \sim \mathcal{U}\left[0,\pi(\theta)\right];\quad\theta\defeq\SqrBrk{\upsilon\;  \scripted{x}_1\; \dots\; \scripted{x}_T};\quad\pi(\theta)\defeq\p{}{\theta\mid Y_T},
\end{equation*} such that the joint density of $(\theta,\scripted{h})$ given $Y_T$ becomes uniform. The conditional densities for $\upsilon$ and each $\scripted{x}_t, t = 1,2,\dots T,$ also become uniform as follows
\begin{equation*}
\begin{aligned}
	\upsilon\mid\theta\setminus\upsilon,\scripted{h}, Y_T &\sim\mathcal{U}[\scripted{S}_\upsilon];\quad\scripted{S}_\upsilon\defeq\left\lbrace\upsilon : \scripted{h} \le\pi(\theta)\right\rbrace ,\\
	\scripted{x}_t\mid\theta\setminus\scripted{x}_t,\scripted{h}, Y_T &\sim\mathcal{U}[\scripted{S}_{\scripted{x}_t}];\quad\scripted{S}_{\scripted{x}_t}\defeq\left\lbrace\scripted{x}_t : h\le\pi(\theta) \right\rbrace.
\end{aligned}
\end{equation*}
We note that $\scripted{S}_\upsilon$ and $\scripted{S}_{\scripted{x}_t}$ are univariate subsets of the $n_\theta$--variates slice $\scripted{S}$ in (\ref{eq:slice}). In this case the boundaries of $\scripted{S}_{\scripted{x}_t}$ can also be analytically derived in a similar manner as with (\ref{eq:Gibbs_x}) as follows
\begin{equation*}
	h_t\defeq\frac{\scripted{h}}{\frac{\p{}{\upsilon}}{e^{\frac{\upsilon(T-1)}{2}}}\exp\left(\sum\limits_{i\in\{1:T\}\setminus t}-\frac{\scripted{x}_i^2}{2}-\frac{(y_i-\scripted{x}_i)^2}{2e^\upsilon} \right)}\le\frac{1}{e^{\upsilon/2}}\exp\left(-\frac{\scripted{x}_t^2}{2} -\frac{(y_i-\scripted{x}_t)^2}{2e^\upsilon}  \right),
\end{equation*} which according to (\ref{eq:ex_posterior}) and (\ref{eq:prod_2_Gaus})--(\ref{eq:Gibbs_x}) will result in all of the conditional densities for $\scripted{x}_t$ having exact solution as follows
\begin{equation*}
	\scripted{x}_t\mid\theta\setminus\scripted{x}_t,\scripted{h}, Y_T \sim\mathcal{U}\left[\mu_t - \sigma\sqrt{-2\log\left(\frac{\sigma h_t}{s_t}\right)},\mu_t + \sigma\sqrt{-2\log\left(\frac{\sigma h_t}{s_t}\right)}\right].
\end{equation*}
Furthermore, the conditional sampling for $\upsilon$ only have to satisfy one single constrain, which results in lower sample correlation than example (\ref{ex:aux_Gibbs}), as seen in figure (\ref{fig:slice}).
\begin{figure}[hbt]
\center
\includegraphics[width=0.49\linewidth]{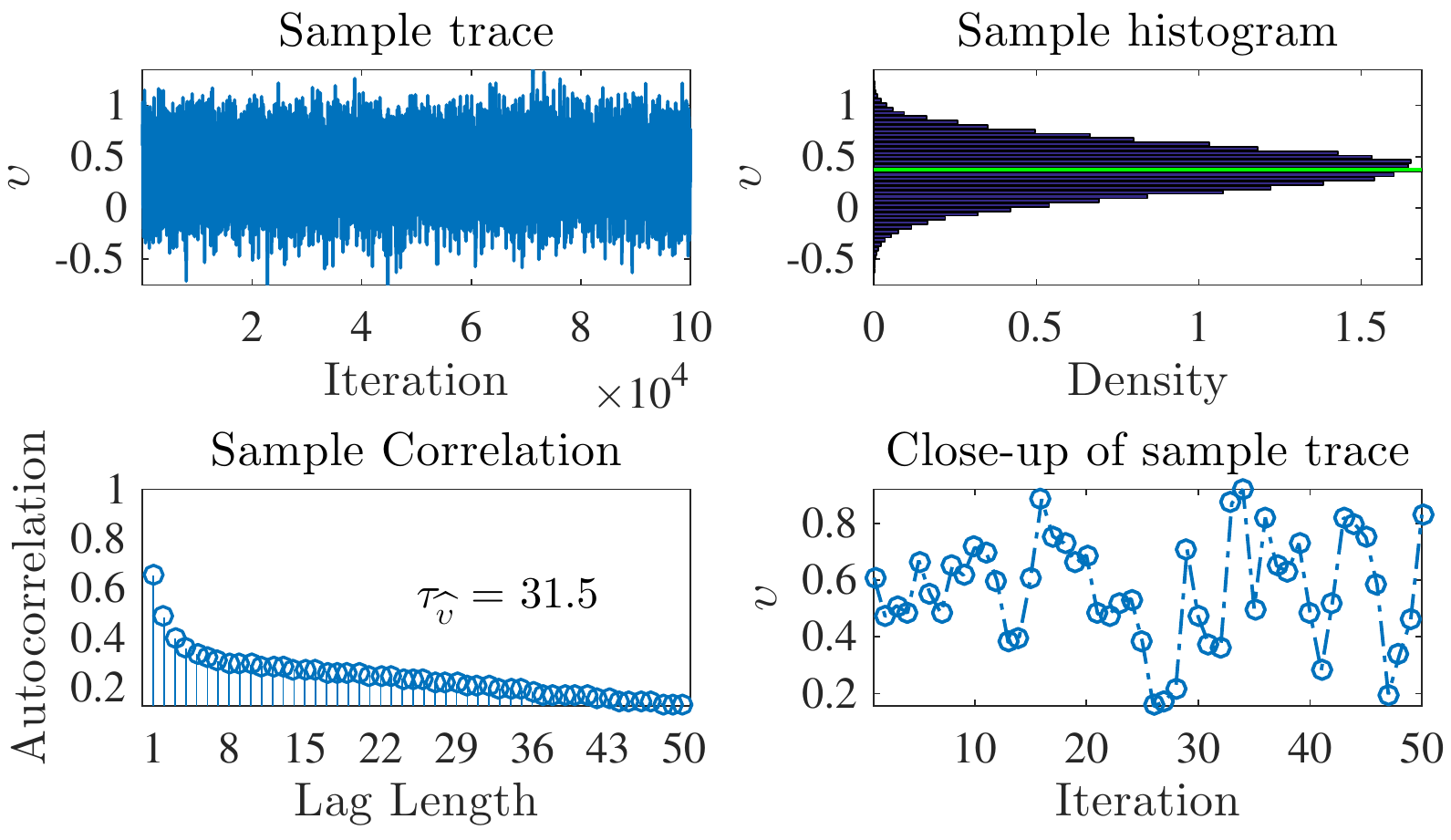}
\includegraphics[width=0.49\linewidth]{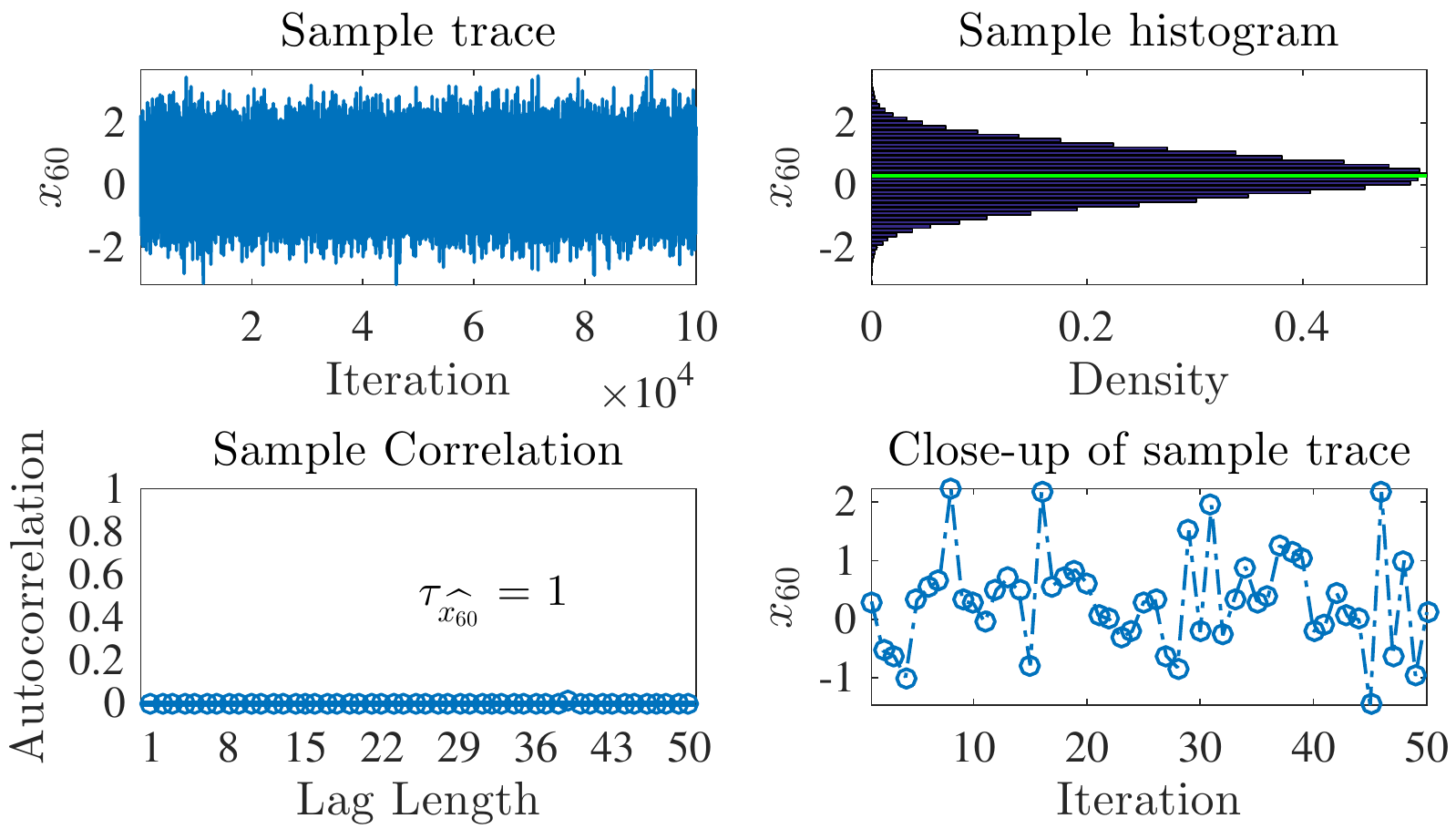}
\caption{Sampling for $\theta\defeq\SqrBrk{\upsilon\;  \scripted{x}_1\; \dots\; \scripted{x}_T}\sim\p{}{\theta\mid Y_T}$ using single auxiliary variable method, i.e. slice sampling}
\label{fig:slice}
\end{figure}
\end{example}

\bibliographystyle{biometrika}
\bibliography{paper-ref}
\end{document}